\documentclass{article}
\usepackage[shortlabels]{enumitem}
\usepackage{titling}

\usepackage{palatino} 
\usepackage{mathpazo}
\usepackage{braket}
\usepackage{amsfonts}
\usepackage{amssymb}
\usepackage{amsmath}
\usepackage{mathrsfs} 
\usepackage{bbm}
\usepackage{latexsym}
\usepackage{amsthm}
\usepackage[usenames]{color}
\usepackage{hyperref, etoolbox}
\usepackage{diagbox}
\usepackage{subcaption}
\usepackage{verbatim}
\usepackage{ragged2e, graphicx, pifont}
\usepackage{indentfirst}
\usepackage{algorithm}
\usepackage[noend]{algpseudocode}
\usepackage{tikz}
\usetikzlibrary{decorations.pathmorphing}
\usetikzlibrary{arrows.meta}
\usetikzlibrary{positioning, calc}
\usepackage{quantikz}

\tikzset{
gate/.style={draw,minimum width=6mm,minimum height=9mm},
cwire/.style={double distance=1pt}
}

\usepackage{mdframed}
\usepackage{thm-restate}

\hypersetup{
colorlinks = true,
citecolor= blue,
linkcolor= blue
}

\usepackage{tabularx}

\hypersetup{pdfpagemode=UseNone}

\newtheorem{theorem}{Theorem}
\newtheorem{lemma}[theorem]{Lemma}

\newtheorem{cor}[theorem]{Corollary}
\newtheorem{definition}{Definition}

\newtheorem{fact}{Fact}

\newcommand{\eps}{\varepsilon}

\newcommand{\clA}{\mathcal{A}}
\newcommand{\clB}{\mathcal{B}}

\newcommand{\clP}{\mathcal{P}}

\newcommand{\clX}{\mathcal{X}}
\newcommand{\clY}{\mathcal{Y}}

\newcommand{\sfH}{\mathsf{H}}
\newcommand{\sfI}{\mathsf{I}}

\newcommand{\sfV}{\mathsf{V}}

\newcommand{\bbE}{\mathop{\mathbb{E}}}

\newcommand{\Tr}{\mathrm{Tr}}
\newcommand{\poly}{\mathrm{poly}}

\newcommand{\negl}{\mathrm{negl}}

\newcommand{\chsh}{\mathrm{CHSH}}
\newcommand{\Imax}{\sfI_{\max}}
\newcommand{\Is}{\sfI_{\mathrm{s}}}

\title{Quantum information advantage based on Bell inequalities}
\author{Rahul Jain \\
\small Centre for Quantum Technologies, \\
\small Department of Computer Science, \\
\small National University of Singapore \\
\small \texttt{dcsrahul@nus.edu.sg} \\
\and Srijita Kundu \\
\small Quantum Computing Research Centre, \\
\small Hon Hai (Foxconn) Research Institute \\
\small \texttt{srijita.kundu@foxconn.com.sg} }
\date{}

\begin{document}

\maketitle

\begin{abstract}
    Recently, Kretschmer et al.~\cite{KGD+25} presented an experimental demonstration of a proposed {\em quantum information advantage} protocol. 

    We present an alternate proposal based on a relation derived from parallel-repeated CHSH games. Our memory measure is based on an information measure and is different from~\cite{KGD+25}, where they count the number of qubits. Our proposal has an efficient verifier and a noise-robust quantum prover which is arguably much more efficient compared to~\cite{KGD+25}.  
\end{abstract}

\section{Introduction}
An important line of research in quantum information science is the demonstration of quantum advantage in computational tasks, i.e., the demonstrating that some problems can be solved with fewer quantum resources than classical resources of a comparable kind.

One of the earliest examples of quantum advantage is Bell nonlocality. In a nonlocal game, spatially separated parties Alice and Bob get inputs that are private to them, and are required to produce outputs that satisfy some conditions. The parties may deploy a classical strategy with shared randomness, or a quantum strategy with shared entanglement in order to win such a game. In some nonlocal games, a quantum strategy is able to achieve a much higher winning probability than a classical strategy, in what is known as a Bell inequality violation.

However, the actual computational task behind a Bell inequality violation is not computationally difficult with classical resources --- the difficulty only arises because the parties are separated and cannot communicate (and thus know each other's inputs). Later demonstrations of quantum advantage focused on tasks that are computationally difficult for a classical computer, but easy even with a near-term or currently existing quantum device.

Most of the work in demonstrating quantum advantage in computationally difficult tasks has been sampling-based, showing that quantum devices can efficiently sample from a distribution (essentially the output distribution of a quantum circuit) that a classical device cannot sample efficiently \cite{Aru19, Wu21, Zhu22, Mad22, Mor24, Gao25, Dec25}. However, we cannot actually \emph{prove} that this sampling tasks are hard for classical devices. There is some complexity theoretic evidence \cite{AA13, AG20} that these tasks are hard, but the claimed quantum advantage here ultimately still relies on unproven complexity theoretic assumptions.

A recent work by Kretschmer et al. \cite{KGD+25} aimed to mitigate this situation by demonstrating a sampling task in which quantum advantage can be proven \emph{unconditionally}. The catch here is that the quantum advantage is not in running time, but in \emph{memory}. The authors showed that there is a computational task that a quantum device can solve with few ($12$) qubits of memory that a classical device provably requires a much higher number (between $62$ and $382$) of bits of memory to solve. They called their result \emph{quantum information advantage}.

The construction of \cite{KGD+25} is based on a problem from the setting of communication complexity. In the communication complexity setting, like in the nonlocal game setting, two spatially separated parties Alice and Bob get private inputs, and have to communicate either classically or quantumly in order to compute a relation on these inputs (with the aim of communicating as little as possible). Many separations between quantum and classical communication costs of problems have previously been shown. The work~\cite{KGD+25} translated a communication complexity separation (in a two devices setting) to a separation between the classical and quantum memory cost in a single device setting.

More specifically, \cite{KGD+25} consider a task that separates \emph{one-way} classical and quantum communication complexity, where there is a restriction that there may only be a single classical or quantum message from Alice to Bob. In their setting for quantum information advantage, they recast Alice and Bob not as two parties separated in space, but as a single device separated in \emph{time}. The device gets Alice's input $x$ for the communication problem at time $t_0$, and Bob's input $y$ at a later time $t_1$. The quantity of interest is how much classical or quantum memory the device needs to store between times $t_0$ and $t_1$ in order to compute the relation of interest on $x$ and $y$. It is easy to see that the memory requirement here is precisely equal to the amount Alice needs to communicate about her input $x$ to Bob in the communication complexity setting, thus allowing the use of a classical vs quantum separation in communication complexity.

\subsection{Our proposal}
We propose a different problem that can be used to demonstrate a quantum information advantage, using a different measure of memory than \cite{KGD+25}. Unlike \cite{KGD+25}, we consider as our measure the information about the input $x$ that has to be passed from time $t_0$ to $t_1$, as quantified by the measure {\em smoothed max mutual information}, $\sfI^\delta_{\max}(X:M)$, where $M$ is the memory register  maintained by the prover between $t_0$ and $t_1$.. This is related to the more standard {\em mutual information} between $X$ and $M$, $\sfI(X:M)$; see Section \ref{sec:prelims} for a definition of both these quantities. We think this is a valid measure for quantum information advantage, since it is more difficult to maintain qubits in memory in such a state that they retain information about inputs, than to maintain them in any state whatsoever. We formally define a proof of quantum information advantage as follows.
\begin{definition}[Proof of quantum information advantage]
Let $c(n), s(n)$ be polynomial functions of $n$, and let $\gamma_0$ be a constant in $[0,1]$. A $(c,s,\gamma_0)$ noise-robust proof of quantum information advantage is an interactive protocol between a verifier which is a PPT (probabilistic polynomial time) algorithm and a prover which could be a PPT or a QPT (quantum polynomial time) algorithm. In the protocol, the verifier tests that the prover is able to compute a relation $R(n) \subseteq (\clX(n)\times\clY(n))\times(\clA(n)\times\clB(n))$, where $n$ is the input parameter. 

\begin{enumerate}
\item The verifier and the prover both get as input $1^n$.
    \item At time $t_0$, the verifier samples  $x\in \clX$ and sends to the prover. Let $X$ denote the random variable corresponding to $x$. 
    \item The prover replies with $a\in\clA$. 
    \item Subsequently, at time $t_1$, the verifier samples $y\in\clY$ and sends to the prover. 
    \item The prover replies with $b\in\clB$. 
    \item The verifier accepts if $(x,y,a,b) \in R$. 
\end{enumerate}
We have the following requirements.
\begin{itemize}
\item Completeness: There exists a QPT prover that can make the verifier accept with probability $1-\negl(n)$. The quantum memory register $M$ maintained by the prover between $t_0$ and $t_1$ satisfies $\sfI^{\negl(n)}_{\max}(X:M)\leq c(n)$. 
\item Noise-robustness: There exists a noise threshold $\gamma_0 \in (0,1)$ such that if the QPT prover is run on $\gamma$-noisy devices for $\gamma\leq \gamma_0$, the verifier still accepts with probability $1-\negl(n)$.
\item Soundness: Any classical prover (which runs in polynomial time or not), which transmits a classical memory register $M$ satisfying $\sfI^{\negl(n)}_{\max}(X:M) \leq s(n)$ from $t_0$ to $t_1$, will make the verifier accept with $\negl(n)$ probability.
\end{itemize}
\end{definition}

In the above definition, we intentionally do not specify the noise model, since it is a practical consideration, and different protocols that are implemented on different hardware will have different types of noise. Nevertheless, since any implementation of a quantum protocol will have noise, noise robustness is a desirable property; our intent is for the relevant type of noise model to be used for each type of protocol in this definition.

Our result with respect to our definition of quantum information advantage is stated formally in Theorem \ref{thm:main}.
\begin{restatable}{theorem}{main}
\label{thm:main}
There exists a $(0, \Omega(n), 0.01)$ noise-robust proof of quantum information advantage. Moreover, this protocol satisfies $\sfI(X:M) = \sfI_{\max}^0(X:M)=0$. 

Let $\delta:=2^{-n/((\ln2)\times 10^4)}$. Any classical prover whose classical memory satisfies $$\Imax^\delta(X:M) \leq \frac{4n}{
(\ln 2) \times 10^4} - \log \log \frac{1}{\delta} - 19,$$ will make the verifier accept with probability at most $72\delta^2$.
 \end{restatable}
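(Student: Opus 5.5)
We would instantiate the protocol with $n$ parallel copies of the CHSH game, recast across time. At $t_0$ the verifier samples $x\in\{0,1\}^n$ uniformly (Alice's inputs for the $n$ copies), and the prover returns $a\in\{0,1\}^n$. At $t_1$ the verifier samples $y\in\{0,1\}^n$ uniformly (Bob's inputs), and the prover returns $b\in\{0,1\}^n$. The relation $R$ accepts if the number of coordinates $i$ with $a_i\oplus b_i = x_i\wedge y_i$ is at least $(\omega^*-\eta)n$, where $\omega^*=\cos^2(\pi/8)\approx 0.854$ and $\eta$ is a small constant. We fix $\eta$ so that the threshold lies strictly between $0.75$ and $\omega^*$, with enough margin that a $1\%$ noise rate still clears it.

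\textbf{Completeness, zero information and noise robustness.} The quantum prover prepares $n$ EPR pairs before $t_0$. On receiving $x$ it measures the first half of each pair in the optimal CHSH basis for $x_i$ and outputs $a$. It keeps only the second halves $M$ and measures these at $t_1$ according to $y$. The reduced state of the unmeasured halves is maximally mixed regardless of $x$, by no-signalling. Hence $\rho_{XM}=\rho_X\otimes \sfI/2^n$, which gives $\sfI(X:M)=\Imax^0(X:M)=0$; the prover's classical output $a$ is sent to the verifier and not stored in $M$.

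For acceptance probability, the $n$ coordinates are independent Bernoulli trials with mean $\omega^*$, so a Chernoff bound gives acceptance $1-2^{-\Omega(n)}$. Under a $\gamma$-noise model (for instance depolarizing noise on each pair or a flip on each output), the per-coordinate win probability is at least $\omega^*-O(\gamma)$. For $\gamma\le 0.01$ this stays above the threshold, and Chernoff applies again.

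\textbf{Soundness.} A classical prover with memory $M$ is equivalent to a one-way classical communication protocol $x\to M$ in which Bob, given $(y,M)$, must produce $b$. With shared randomness absorbed into $M$ and Alice's output $a$ a function of $(x,r)$, we would argue as follows.

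\begin{enumerate}
\item \textbf{Reduce to a nonlocal game.} If $\Imax^\delta(X:M)\le k$, then via the standard correspondence between max-information and conditioning, Bob's guess of Alice's relevant bits behaves like a local strategy assisted by $k$ bits of advice. A convenient route is the substate/rejection-sampling lemma: there is a $\delta$-close distribution under which $P_{XM}\le 2^{k} P_X\times P_M$. Then the event "win $\ge(\omega^*-\eta)n$ coordinates" has probability at most $2^{k}$ times its probability when $M$ is independent of $X$, plus $O(\delta)$.
\item \textbf{Bound the independent case.} When $M$ is independent of $X$, the strategy is a classical local strategy for $n$-fold CHSH, and the win probability is at most $3/4$ per coordinate even conditioned on previous coordinates. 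Azuma or Chernoff then gives that the probability of winning $(\omega^*-\eta)n$ coordinates is at most $2^{-2(\omega^*-\eta-3/4)^2 n}$.
\item \textbf{Combine the bounds.} Choosing $\eta$ so that the exponent is about $4n/((\ln 2)10^4)$ or more, the hypothesis on $k$ makes $2^{k}$ times the tail bound at most about $\delta^2$. The $\log\log(1/\delta)$ and constant terms in the hypothesis come from the smoothing overhead of converting $\Imax^\delta$ into a pointwise density bound. The additive error terms are then collected into the $72\delta^2$ bound.
\end{enumerate}

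\textbf{Main obstacle.} The delicate step is the first one: converting the smoothed max-information into a density bound with error of order $\delta^2$ rather than $\delta$. This requires a careful smoothing lemma. We would first try a Markov-type truncation of $P_{XM}/(P_X\times P_M)$ to handle this step. The constants in the statement suggest that the authors' bookkeeping is of exactly this type.
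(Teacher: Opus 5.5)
Your completeness and zero-information argument matches the paper's. The soundness argument, however, has two genuine problems.

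\textbf{Step~2 is false.} A classical strategy for $\chsh^n$ does not win each coordinate with probability at most $3/4$ conditioned on the earlier ones. If it did, the chain rule would give $\omega(\chsh^n)\le(3/4)^n$. But Fact~\ref{thm:chsh-n} gives $\omega(\chsh^n)=((1+\sqrt{5})/4)^n\approx 0.809^n$ for large $n$. Already for $n=2$, let Alice answer $(x_1x_2,0)$ and Bob answer $(0,y_1y_2)$. This wins both copies with probability $5/8>9/16$, so the second copy is won with probability $5/6$ given the first. Correlating answers across copies genuinely helps, so Azuma or Chernoff around mean $3/4$ is not justified.

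The missing ingredient is the paper's Lemma~\ref{lem:CHSH-thres}. Embedding a $\chsh^{|S|}$ instance into the coordinates of $S$ shows that every subset $S$ is won with probability at most $((1+\sqrt{5})/4)^{|S|}$. The Panconesi--Srinivasan bound (Fact~\ref{fc:gen-chernoff}) then gives $\omega(\chsh^{t/n})\le e^{-2\eps^2 n}$ for $t=((1+\sqrt{5})/4+\eps)n$. So the threshold must sit above $0.809n$, not merely above $0.75n$. This is why the protocol uses $0.83n$ with $\eps=0.02$. The resulting $2\eps^2n/\ln 2=8n/((\ln 2)\times 10^4)$, minus $4\log(1/\delta)$, is the $4n/((\ln 2)\times 10^4)$ in the statement.

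\textbf{The $\delta^2$ is out of reach.} Make Step~1 precise as follows: keep $p'_{XM}$, regenerate $A$ from $p_{A|XM}$ and $B$ from $p_{B|YM}$, and note that under $p_X\times p_M$ the memory is just shared randomness. Then Step~1 gives $\Pr[\text{win}]\le\delta+2^k\,\omega(\chsh^{t/n})$. This is a more direct route than the paper's, but the error is an additive $\delta$. No truncation lemma can improve it for the trace-distance smoothing of Definition~\ref{def:Imax}.

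Consider a prover that outputs $a=0^n$, stores $M=(1,x)$ with probability $\delta$ and $M=\bot$ otherwise, and answers $b_i=x_iy_i$ whenever it knows $x$. It is accepted with probability at least $\delta$. Yet moving the $\delta$ mass onto $\bot$ shows $\Imax^\delta(X:M)\le\log\frac{1}{1-\delta}$. So for large $n$ this prover meets the theorem's hypothesis and beats $72\delta^2$.

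The paper obtains $72\delta^2$ by a different chain. It passes from $\Imax^\delta$ to $\Is^{18\delta^2}$ via Fact~\ref{fc:Is-Imax}, then applies message compression (Fact~\ref{thm:cl-comp}), then Corollary~\ref{cor:cl-comm-ub}. The $\log\log$ term and the constants come from the compression and conversion steps, not from a Markov truncation. But the same example has $\Is^{18\delta^2}(X:M)=n$, so Fact~\ref{fc:Is-Imax} would force $n\le 2\log(1/\delta)+O(1)$. Hence that quadratic conversion cannot hold for a trace-distance ball. It is the behaviour of a purified-distance ball, where moving mass $q$ costs distance about $\sqrt{q}$.

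With Definition~\ref{def:Imax} as written, here is what your argument proves once Lemma~\ref{lem:CHSH-thres} replaces Step~2: acceptance probability at most $2\delta$ whenever $\Imax^\delta(X:M)\le 2\eps^2n/\ln 2-\log(1/\delta)$. Since $\delta$ is negligible, this still yields the $(0,\Omega(n),0.01)$ claim, but not the $72\delta^2$ bound.
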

\paragraph{Protocol description.} Our protocol for proof of quantum information advantage is based on a Bell inequality violation, specifically of the CHSH game. Bell inequality violations can be turned into separations in communication for relational problems by requiring that Alice and Bob produce outputs such that the fraction of correct outputs according to the nonlocal game winning conditions is close to the quantum winning probability of the game. Such a non-local game is called a threshold parallel-repeated game. We consider the $t$-out-of-$n$ threshold parallel repeated game $\chsh^{t/n}$, whose input and output sets are all $n$-bit strings. The winning condition of the game is that the strings $x_1\ldots x_n, y_1\ldots y_n, a_1\ldots a_n, b_1\ldots b_n$ satisfy
\[ \left|\{i: a_i\oplus b_i=x_i\cdot y_i\}\right| \geq t,\]
with $t=0.83n$. Due to being based on non-local games, compared to \cite{KGD+25},  our proposal has the advantage of requiring states and measurements that are far less complicated. The protocol is pictorially depicted in Figure~\ref{fig:circuit}.
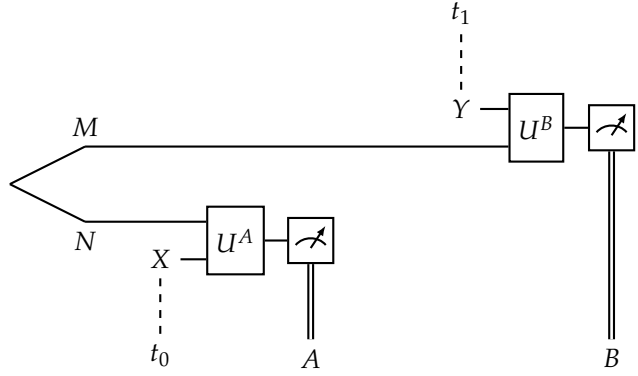
\begin{figure}[!ht]
\centering
\begin{tikzpicture}[thick]

\coordinate (fork) at (0,0);
\coordinate (M) at (1,0.5);
\coordinate (Nstart) at (1,-0.5);

\draw (fork) -- (M) node[right, above] {$M$};
\draw (fork) -- (Nstart) node[right, below] {$N$};

\node[gate] (UA) at (3,-0.75) {$U^A$};
\node (X) at (2,-1) {$X$};

\draw (X) -- (UA.west |- X);
\draw (Nstart) -- (UA.west |- Nstart);

\node[meter,draw,minimum size=6mm] (mA) at (4,-0.75) {};
\draw (UA.east |- mA) -- (mA);

\draw[cwire] (mA.south) -- ++(0,-1) node[below] {$A$};

\draw[dashed] (X) -- ++(0,-1) node[below] {$t_0$};

\node[gate] (UB) at (7,0.75) {$U^B$};

\node (Y) at (6,1) {$Y$};
\draw (Y) -- (UB.west |- Y);
\draw (M) -- (UB.west |- M);

\node[meter,draw,minimum size=6mm] (mB) at (8,0.75) {};
\draw (UB.east |- mB) -- (mB);

\draw[cwire] (mB.south) -- ++(0,-2.5) node[below] {$B$};

\draw[dashed] (Y) -- ++(0,1) node[above] {$t_1$};

\end{tikzpicture}

\caption{Circuit implemented by the quantum prover in our proof of quantum information advantage protocol. The unitaries $U^A$ and $U^B$ are Alice and Bob's unitaries in the optimal quantum strategy for the $\chsh^n$ game. The register $M$ is the quantum memory.}
\label{fig:circuit}
\end{figure}

\paragraph{Noise model.} We now clarify what is meant by noise in our setting. If the quantum value of a non-local game is $\omega^*$, we refer to quantum devices that are capable of achieving value $\omega^*-\gamma$, $\gamma$-noisy devices. We do not seek to model the inner workings of the device that leads to this value --- is consistent with how noise is modeled in many Bell scenarios, see for example, \cite{Nad22, Zha22}. There is some constant level of noise in each single-qubit gate which leads to this value of $\gamma$.
\paragraph{Concrete parameters.} 
\begin{enumerate}
    \item Let $n=1.1697\times 10^4$ and the smoothing parameter $\delta=\frac{1}{12}$ (here $n$ and $\delta$ do not have the same relation as in Theorem~\ref{thm:main}, see Theorem~\ref{thm:cl-inf-ub}).
    \item The success probability of the quantum prover (with $\sfI(X:M) = \sfI_{\max}^0(X:M)=0$) in our protocol is at least $1-e^{-2n/10^4}=0.903$ (see the proof of Theorem~\ref{thm:main}). 
    \item The success probability of a classical prover who has classical memory $\Imax^{1/12}(X:M) \leq 100$, is at most $72\delta^2=\frac{1}{2}$. 
\end{enumerate}
Thus in order to achieve a $0$ vs $100$ quantum-classical separation, we need to be able to do the CHSH winning strategy in parallel on $1.1697\times 10^4$ many copies, with $\gamma=0.01$ noise.

\subsection{Comparison with previous work}
We now give a more detailed comparison between our proposal and the work of \cite{KGD+25}. The task used to prove quantum information advantage in \cite{KGD+25} is a version of distributed linear cross-entropy benchmarking. In this task, at time $t_0$, the prover gets the classical description of a Haar-random $m$-qubit quantum state $\ket{\psi}$ and at time $t_1$ they get the description of a Haar-random unitary $U$, and the requirement is to output an $m$-bit string $z_{\psi,U}$ such that
\[ \bbE_{\psi, U}\left[\left|\bra{z_{\psi,U}}U\ket{\psi}\right|^2\right] \geq \frac{1+\eps}{2^m}.\]
Note that an $m$-qubit Haar-random quantum state takes $\widetilde{O}(2^m)$ many bits to describe, so the input parameter here is $n=2^m$. \cite{KGD+25} use $\eps$ to characterize the correctness of their protocol: it is trivial to achieve $\eps=0$. They show that a quantum prover using $m$ qubits of quantum memory can achieve $\eps=1$, whereas a classical prover using $s$ bits of classical memory satisfies
\[ \eps = O\left(\max\left\{\sqrt{\frac{s}{2^m}}, \frac{sm}{2^m}\right\}\right).\]
The definition of $\eps$ for both the classical and quantum prover here is inherently average-case. We also note that the unitary at time $t_1$ can come from a number of smaller sets of unitaries which are efficiently implementable as well, such as $2$-designs or products of single-qubit Clifford gates; they show similar upper bounds on $\eps$ for $U$ coming from these sets.

We note the following points of comparison between our proposal and the proposal in~\cite{KGD+25}.
\begin{itemize}
\item \textbf{Relation:} The relation we use has two outputs, $a$ and $b$. The sampling problem considered in \cite{KGD+25} has two inputs $x$ and $y$ like ours, but only one output that has to be produced after $t_1$ when $y$ is given. For us, it is important that the output $a$ is produced when $x$ is given at $t_0$ and only output $b$ is produced later. However, we do not consider this feature of our proposal much of a drawback, since having two outputs is not particularly unnatural compared to having one output.
\item \textbf{Memory measure:} The information measure used in \cite{KGD+25} is the total number of qubits or bits that need to be passed from $t_0$ to $t_1$.

For us it is instead an information measure about the input $x$ in these bits or qubits. Notably, the number of bits or qubits respectively that need to be passed from $t_0$ to $t_1$ is $\Omega(n)$ for us in the classical and quantum cases both, it is only that in the quantum case the qubits carry no information about $x$.

\item \textbf{Completeness and soundness errors:} The notion of completeness and soundness errors we use is much more standard. 

As far as we understand, the protocol in \cite{KGD+25} is not explicitly presented as an interactive proof with standard notions of completeness and soundness errors. This is because theirs is a sampling problem, and the verifier does not accept or reject based on some condition, and instead computes the value of $\eps$. 
\item \textbf{Verifier's efficiency:} Our verifier can straightforwardly check in $O(n)$ time whether the CHSH winning condition is satisfied in $0.83$ fraction of $i$-s for each input and output. 

In \cite{KGD+25}, a classical verifier can calculate $\left|\bra{z}U\ket{\psi}\right|^2$ for a given $U$ and $\ket{\psi}$ in $\poly(2^m)$ time, which is efficient in $n=2^m$. But as far as we understand, the verifier has to run the linear cross-entropy protocol several times and find an estimate of $\eps$ from this. This would require the verifier to send several $\ket{\psi}$-s at $t_0$ and several $U$-s at $t_1$. The number of $\ket{\psi}$-s and $U$-s (that would depend on $\eps$) to be sent does not appear to be explicitly stated in~\cite{KGD+25}. Also it is not clear if the lower bound for classical memory would hold for this parallel repeated protocol. 
\item \textbf{Quantum prover's efficiency:} The state preparation and measurement procedures for us are much simpler than those in \cite{KGD+25}. We only require the preparation of EPR pairs, and $4$ types of single-qubit gates (corresponding to the measurements in the CHSH non-local game), which should be possible to implement in existing quantum devices. 

\cite{KGD+25} can use $2$-designs or random Cliffords for $U$, but they do actually need to prepare an $m$-qubit Haar-random $\ket{\psi}$. Exact Haar-random need $\Omega(2^m)$ gates and $\Omega(2^m/m)$ depth to prepare, which may have been infeasible in their actual implementation. They use a brickwork ansatz with much smaller circuit size and depth in its place, in their actual implementation. However, as far as we understand, their classical information lower bound does not actually hold against this ansatz, since this ansatz has a $\poly(m)$-size description.
\item \textbf{Noise-robustness:} \cite{KGD+25} do not do an asymptotic noise robustness analysis for their quantum protocol. While it may be possible to get a nontrivial output for their brickwork ansatz (which does not have a classical lower bound) if each gate has constant error, their actual lower bound requires Haar-random states. If each of the $\Omega(2^n)$ gates required to produce the Haar-random state is implemented with constant error, we suspect their quantum protocol would not achieve a nontrivial value of $\eps$. 

Our protocol in contrast is robust to each gate having constant noise in such a way that the single-copy CHSH winning probability is still $0.84$.

\item \textbf{Concrete numbers for quantum memory:} The actual number of qubits required to achieve a $0$ vs $100$ quantum information advantage in our proposal is of the order of $10^4$. However, doing only CHSH measurements on $10^4$ EPR pairs is possible with current technology, and has already been done with the kind of error rate we require in the context of demonstrating Bell inequality violations and device-independent cryptography \cite{Nad22, Zha22}. We note however that the measurements in these device-independent protocols were done sequentially, and some spatial separation needed to be maintained between two sets of devices. Our measurements need to be done in parallel, but we don't need this spatial separation.

As we have stated before, the exact memory needed by quantum prover in the protocol by~\cite{KGD+25} is not clear since the number of times $(\ket{\psi}, U)$-s that have to be sent is not stated. They have shown that a single repetition requires only $12$ qubits.
\end{itemize}

\section{Preliminaries}\label{sec:prelims}
\subsection{Tail bounds}
\begin{fact}[Chernoff bound] \label{fact:chernoff}
Let $X_1, \ldots, X_n$ be i.i.d. Bernoulli variables with $\bbE[X_i] = \delta$. Then,
\begin{align*}
\Pr\left[\sum_{i=1}^n X_i \geq \gamma n \right] & \leq e^{-2(\gamma-\delta)^2n}; \\
\Pr\left[\sum_{i=1}^n X_i \leq \gamma n \right] & \leq e^{-2(\gamma-\delta)^2n}.
\end{align*}
\end{fact}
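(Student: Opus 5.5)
The statement immediately above is the Chernoff--Hoeffding bound (Fact~\ref{fact:chernoff}), and I will prove it by the standard exponential moment (Chernoff) method combined with Hoeffding's lemma. Write $S=\sum_{i=1}^n X_i$. Both inequalities are only meaningful in the appropriate regime: the first for $\gamma\geq\delta$ and the second for $\gamma\leq\delta$. In the opposite regime I will read the right-hand side via the usual convention, or simply restrict the statement to the meaningful regime. By the symmetry $X_i\mapsto 1-X_i$, which sends $\delta\mapsto 1-\delta$ and $\gamma\mapsto 1-\gamma$ and leaves $(\gamma-\delta)^2$ unchanged, the lower-tail bound follows from the upper-tail bound. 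So it suffices to prove the first inequality for $\gamma\geq\delta$.

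\textbf{Markov step.} For any $\lambda>0$, Markov's inequality gives $\Pr[S\geq\gamma n]\leq e^{-\lambda\gamma n}\,\bbE[e^{\lambda S}]$. By independence, $\bbE[e^{\lambda S}]=\prod_i\bbE[e^{\lambda X_i}]$.

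\textbf{Hoeffding's lemma.} The key step is to show that for a random variable $Z\in[0,1]$ with mean $\delta$,
\[ \bbE[e^{\lambda(Z-\delta)}]\leq e^{\lambda^2/8}. \]
I will do this by convexity, bounding $e^{\lambda z}\leq (1-z)+z e^{\lambda}$ on $[0,1]$. This reduces the claim to showing that $\phi(\lambda)=-\lambda\delta+\ln(1-\delta+\delta e^{\lambda})$ satisfies $\phi(0)=\phi'(0)=0$ and $\phi''(\lambda)\leq 1/4$. The bound on $\phi''$ holds because $\phi''(\lambda)=p(1-p)$ with $p=\delta e^{\lambda}/(1-\delta+\delta e^\lambda)$. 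Taylor's theorem then gives $\phi(\lambda)\leq\lambda^2/8$. For Bernoulli variables this computation is explicit.

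\textbf{Optimization.} Combining the two steps,
\[ \Pr[S\geq\gamma n]\leq \exp\bigl(-\lambda(\gamma-\delta)n+n\lambda^2/8\bigr). \]
Choosing $\lambda=4(\gamma-\delta)$ yields $e^{-2(\gamma-\delta)^2 n}$, as claimed.

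The only nontrivial step is the second derivative bound in Hoeffding's lemma. Everything else is routine.
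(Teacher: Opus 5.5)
Your proof is correct. It is the standard exponential-moment argument with Hoeffding's lemma, and each step checks out: the chord bound, $\phi(0)=\phi'(0)=0$, $\phi''=p(1-p)\le 1/4$, the choice $\lambda=4(\gamma-\delta)$, and the reflection $X_i\mapsto 1-X_i$ for the lower tail. The paper gives no proof of this fact; it only quotes it as a standard tool, so there is no competing route to compare against.

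The one thing to sharpen is the regime. Your phrase ``the usual convention'' needs to be made concrete, because as literally written the first inequality is false for $\gamma<\delta$: taking $\gamma=0$ and $\delta=1/2$ would require $1\le e^{-n/2}$. So you must either assume $\gamma\ge\delta$ for the upper tail and $\gamma\le\delta$ for the lower tail, as the paper does explicitly in Fact~\ref{fc:gen-chernoff}, or replace $\gamma-\delta$ by $\max\{\gamma-\delta,0\}$.

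Your argument also uses only independence and $X_i\in[0,1]$. It therefore gives the same bound for non-identically distributed copies, with $\delta$ replaced by the average mean. That is convenient for the paper's completeness argument, where each copy's winning probability is only known to be at least $0.85-\gamma$.
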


\begin{fact}[Generalized Chernoff bound, \cite{PS97}]\label{fc:gen-chernoff}
Let $X_1, \ldots, X_n$ be boolean random variables such that for some $\delta \in [0,1]$, we have for every subset $S \subseteq [n]$, $\Pr\left[\land_{i\in S}X_i\right] \leq \delta^{|S|}$. Then, for any $\gamma \in [\delta, 1]$, we have,
\[\Pr\left[\sum_{i=1}^nX_i \geq \gamma n\right] \leq e^{-2(\gamma-\delta)^2n}.\]
\end{fact}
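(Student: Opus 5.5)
The plan is to follow the random-subset argument of Impagliazzo and Kabanets for the generalized Chernoff bound (Fact~\ref{fc:gen-chernoff}). It avoids independence altogether and uses only the hypothesis $\Pr[\land_{i\in S}X_i]\le\delta^{|S|}$. Fix a parameter $q\in[0,1)$ and pick a random subset $S\subseteq[n]$ by putting each $i$ into $S$ independently with probability $q$. I would bound the quantity
\[ \Phi := \bbE_S \Pr\left[\land_{i\in S}X_i\right] \]
from above and from below, and compare the two bounds.

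\textbf{Upper bound.} By hypothesis,
\[ \Phi \le \sum_{S} q^{|S|}(1-q)^{n-|S|}\delta^{|S|} = (1-q+q\delta)^n. \]

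\textbf{Lower bound.} Swap the order of expectations. For a fixed outcome of $X_1,\ldots,X_n$ with $k=\sum_i X_i$, the event $\land_{i\in S}X_i$ holds exactly when $S$ avoids the $n-k$ indices with $X_i=0$. This happens with probability $(1-q)^{n-k}$, so
\[ \Phi=\bbE_X\left[(1-q)^{n-\sum_i X_i}\right]. \]
Restricting to outcomes with $\sum_i X_i\ge\gamma n$ gives
\[ \Phi\ \ge\ \Pr\left[\textstyle\sum_i X_i\ge\gamma n\right](1-q)^{(1-\gamma)n}. \]

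\textbf{Combining and optimizing.} The two bounds together give
\[ \Pr\left[\textstyle\sum_i X_i\ge\gamma n\right]\le\left(\frac{1-q+q\delta}{(1-q)^{1-\gamma}}\right)^n. \]
For $\delta\le\gamma<1$ (and $\delta<1$), I would choose $q=\frac{\gamma-\delta}{\gamma(1-\delta)}$, which lies in $[0,1)$. A routine calculation then shows that the base equals $e^{-D(\gamma\|\delta)}$, where $D(\gamma\|\delta)=\gamma\ln\frac{\gamma}{\delta}+(1-\gamma)\ln\frac{1-\gamma}{1-\delta}$ is the binary KL divergence in nats. To finish, I invoke Pinsker's inequality for Bernoulli distributions, $D(\gamma\|\delta)\ge 2(\gamma-\delta)^2$. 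This gives the claimed bound $e^{-2(\gamma-\delta)^2 n}$.

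\textbf{Edge cases.} These are handled directly.
\begin{itemize}
\item If $\gamma=\delta$, the bound is trivial.
\item If $\delta=1$, then $\gamma=1$ and the bound is again trivial.
\item If $\gamma=1$ and $\delta<1$, take $S=[n]$. Then $\Pr[\sum_i X_i\ge n]\le\delta^n=e^{-n\ln(1/\delta)}$, and $\ln(1/\delta)=D(1\|\delta)\ge2(1-\delta)^2$ by the same Pinsker estimate.
\end{itemize}

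\textbf{Main obstacle.} The only step with real content is the double-counting identity for $\Phi$, which turns the subset hypothesis into a tail bound. The algebra of the optimization in $q$ is where mistakes are most likely, so I would verify it carefully. If the exact optimization becomes messy, I would instead verify directly that the function
\[ f(q)=n\ln(1-q+q\delta)-(1-\gamma)n\ln(1-q) \]
has its minimum value equal to $-nD(\gamma\|\delta)$, by setting $f'(q)=0$.
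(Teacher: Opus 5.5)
The paper does not prove this fact; it only cites \cite{PS97}. The standard proof, as in \cite{PS97}, dominates moment generating functions. For boolean $X_i$ and $t\ge 0$ one has $e^{t\sum_i X_i}=\prod_i\bigl(1+(e^t-1)X_i\bigr)=\sum_{S\subseteq[n]}(e^t-1)^{|S|}\prod_{i\in S}X_i$, with nonnegative coefficients. The hypothesis therefore gives $\bbE[e^{t\sum_i X_i}]\le(1-\delta+\delta e^t)^n$, which is the moment generating function of $n$ i.i.d.\ Bernoulli$(\delta)$ variables. The usual Chernoff--Hoeffding bound then applies unchanged.

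Your random-subset (Impagliazzo--Kabanets) argument is correct, and underneath it is the same computation. Setting $e^t=1/(1-q)$ gives $\Phi=(1-q)^n\,\bbE[e^{t\sum_i X_i}]$. So your upper bound on $\Phi$ is exactly this moment-generating-function domination, and your lower bound is Markov's inequality applied to $e^{t\sum_i X_i}$. What your presentation buys is self-containedness: you never invoke the independent-case Chernoff bound, and you reach the sharper $e^{-nD(\gamma\|\delta)}$ explicitly before applying Pinsker. The moment-generating-function route is more modular, since it transfers every such tail bound for the independent case at once.

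One small omission in your edge cases: when $\delta=0<\gamma<1$, your choice gives $q=1\notin[0,1)$, and the combined bound divides by $(1-q)^{1-\gamma}=0$. This case is trivial, because taking $S=\{i\}$ forces $X_i=0$ almost surely, so the probability is $0$. It should still be listed alongside your other edge cases.
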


\subsection{Information measures}
Let $XY$ be two random variables with the joint distribution $p_{XY}$. We use $p_{Y|x}$ to denote the distribution of $Y$ conditioned on $X=x$.
\begin{definition}[Smoothed max-information]\label{def:Imax}
For $\gamma \in [0,1]$, the $\gamma$-smoothed maximum mutual information $\Imax$ between registers $X$ and $Y$ in the state $\rho_{XY}$ is given by
\[ \Imax^\gamma(X:Y)_\rho = \inf\left\{c : \, \rho'_{XY} \preceq 2^c \rho_X \otimes \rho_Y, \, \frac{1}{2}\Vert \rho'_{XY} - \rho_{XY}\Vert_1 \leq \gamma \right\}. \]
When $X$ and $Y$ are both classical, and they have the joint classical distribution $p_{XY}$, we omit the subscript in the notation, and the quantity reduces to
\[ \Imax^\gamma(X:Y) = \inf\left\{c: p'_{XY}(xy) \leq 2^c p(x)p(y) \, \forall x, y, \, \frac{1}{2}\Vert p'_{XY} - p_{XY}\Vert_1 \leq \gamma \right\}.\]
\end{definition}
\begin{definition}[Classical smoothed spectrum max-information, \cite{ABJT18}]
For classical random variables $XY$, the $\gamma$-smoothed spectrum maximum mutual information between them is defined as
\[ \Is^\gamma(X:Y) = \inf\left\{c: \Pr_{(x,y) \sim p_{XY}}\left[\frac{p_{Y|x}(y)}{p_Y(y)}>2^c\right]\leq \gamma\right\}.\]
\end{definition}
For both $\Imax^\gamma$ and $\Is^\gamma $, it is not difficult to see that they are increasing functions of $\gamma$. We use simply $\Imax(X:Y)_\rho$ and $\Is(X:Y)$ for $\Imax^0(X:Y)_\rho$ and $\Is^0(X:Y)$.

Moreover, if two variables $X$ and $Y$ are uncorrelated, then $\Imax(X:Y)=\Is(Y:X)=0$.

The two maximum mutual information quantities for classical variables can be shown to satisfy the following relation.
\begin{fact}[\cite{ABJT18}, Theorem 1 and Theorem 2]\label{fc:Is-Imax}
For $\gamma \leq 1/\sqrt{12}$,
\[\Is^{18\gamma^2}(X:Y) \leq \Imax^\gamma(X:Y) + \log\left(\frac{8+\gamma^2}{\gamma^2}\right) + 2.\]
\end{fact}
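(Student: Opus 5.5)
The plan is a direct change-of-measure argument in the spirit of \cite{ABJT18}: start from a near-optimal smoothed distribution witnessing $\Imax^\gamma(X:Y)$ and use it to bound the $p_{XY}$-probability of the set where the pointwise density ratio is large. Concretely, let $c$ exceed $\Imax^\gamma(X:Y)$ by an arbitrarily small amount. Fix $p'_{XY}$ with
\[ p'_{XY}(xy) \leq 2^c p_X(x)p_Y(y) \quad \text{for all } x,y, \qquad \tfrac12\Vert p'_{XY}-p_{XY}\Vert_1\leq\gamma. \]
Set $c' := c + \log\left(\frac{8+\gamma^2}{\gamma^2}\right)+2$, and define the bad set
\[ B := \left\{(x,y): \frac{p_{Y|x}(y)}{p_Y(y)} > 2^{c'}\right\} = \left\{(x,y): p_{XY}(xy) > 2^{c'}p_X(x)p_Y(y)\right\}. \]
By the definition of $\Is$, it suffices to show $p_{XY}(B)\leq 18\gamma^2$; letting $c\downarrow\Imax^\gamma(X:Y)$ then gives the claim.

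The first step is the basic comparison on $B$. For every $(x,y)\in B$,
\[ p_{XY}(xy)-p'_{XY}(xy) \;\geq\; p_{XY}(xy) - 2^{c}p_X(x)p_Y(y) \;>\; p_{XY}(xy)\left(1-2^{c-c'}\right). \]
Summing over $B$ and using that the total positive part of $p-p'$ is at most the trace distance gives
\[ p_{XY}(B)\left(1-2^{c-c'}\right) \leq \gamma. \]
With the chosen gap, $2^{c-c'}\leq \gamma^2/(4(8+\gamma^2))$ is tiny, so this already yields $p_{XY}(B)\leq \gamma(1+O(\gamma^2))$.

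The main obstacle is upgrading this linear-in-$\gamma$ bound to the quadratic bound $18\gamma^2$. I expect the quadratic dependence to come from the distance convention used for smoothing in \cite{ABJT18}, which is a fidelity or purified-distance type smoothing. For such a convention a smoothing radius of order $\gamma$ corresponds to a weaker, squared-scale deviation on the event $B$. My first attempt would therefore be to redo the comparison above using a fidelity bound instead of the $\ell_1$ bound. For this, write $\sqrt{p'}$ versus $\sqrt{p}$ on $B$ and apply Cauchy--Schwarz: since $p'/p\leq 2^{c-c'}$ on $B$, the contribution of $B$ to $\sum\sqrt{p\,p'}$ is at most $2^{(c-c')/2}p(B)$. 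The loss of fidelity is then at least roughly $p(B)\left(1-2^{(c-c')/2}\right)$, and matching this loss against $1-F\lesssim \gamma^2$ would give $p(B)\lesssim\gamma^2$. The constants $18$, $8+\gamma^2$ and $+2$ should come out of this bookkeeping, together with renormalising $p'$ (which may be subnormalised) and the condition $\gamma\leq 1/\sqrt{12}$ that keeps $1-2^{(c-c')/2}$ bounded away from $0$.

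One caveat must be stated plainly. With the pure $\ell_1$ smoothing of Definition~\ref{def:Imax}, a distribution that places mass $\gamma$ on pairs with enormous density ratio, which $p'$ simply deletes, shows that only $p_{XY}(B)\lesssim\gamma$ can hold. So the key step is to invoke \cite{ABJT18}'s Theorems 1 and 2 in their native smoothing convention and check that the translation to the convention used here is exactly what yields $18\gamma^2$. If that translation fails, the fallback is to use the linear bound above, namely $\Is^{2\gamma}(X:Y)\leq\Imax^\gamma(X:Y)+1$ for $\gamma\leq 1/2$, in the later soundness analysis (Theorem~\ref{thm:cl-inf-ub}), with the constants there adjusted accordingly.
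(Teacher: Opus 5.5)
Your caveat is correct, and it is the decisive point. With the trace-distance smoothing of Definition~\ref{def:Imax}, the statement is false, so no bookkeeping can upgrade your linear bound to $18\gamma^2$.

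Here is a concrete instance. Take $X,Y$ uniform on $[N]$ with $p_{XY}(x,y)=\frac{1-\gamma}{N^2}$ for $x\neq y$ and $p_{XY}(x,x)=\frac{1-\gamma}{N^2}+\frac{\gamma}{N}$.
\begin{itemize}
\item The product $p'=p_X\otimes p_Y$ satisfies $\frac12\Vert p'-p_{XY}\Vert_1=\gamma(1-\frac1N)\le\gamma$, so $\Imax^\gamma(X:Y)\le 0$.
\item The diagonal has mass $\gamma+\frac{1-\gamma}{N}>\gamma$, and on it $p_{Y|x}(y)/p_Y(y)=1-\gamma+\gamma N$.
\item For $\gamma<1/18$ we have $18\gamma^2<\gamma$, hence $\Is^{18\gamma^2}(X:Y)=\log(1-\gamma+\gamma N)\to\infty$ as $N\to\infty$.
\item The right-hand side stays at most $\log\frac{8+\gamma^2}{\gamma^2}+2$.
\end{itemize}

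The paper offers no argument beyond the citation of \cite{ABJT18}. There the smooth max-information is taken over a purified-distance ball, $P=\sqrt{1-F^2}$. The quadratic conversion $\gamma\mapsto 18\gamma^2$ belongs to that convention and does not survive transcription into an $\ell_1$ definition. So the defect lies in the stated Fact, not in your reasoning. Two consequences for Theorem~\ref{thm:cl-inf-ub}:
\begin{itemize}
\item The bound $72\delta^2$ there is justified only if $\Imax^\delta$ is read in purified distance. That is a weaker soundness guarantee, since purified-distance balls are smaller.
\item Alternatively, use your fallback $\Is^{2\gamma}(X:Y)\le\Imax^\gamma(X:Y)+1$. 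This needs $p'$ normalized; with subnormalized $p'$ the positive part of $p-p'$ is bounded only by $\Vert p-p'\Vert_1$, so use $4\gamma$. With $\gamma=\delta'/8$ it turns the soundness error $72\delta^2$ into $8\delta$.
\end{itemize}

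Under purified-distance smoothing, your fidelity route does prove the Fact, but your bookkeeping is off by a factor that matters. On $B^c$, Cauchy--Schwarz only gives $\sum_{B^c}\sqrt{p\,p'}\le\sqrt{p(B^c)\,p'(B^c)}\le\sqrt{1-p(B)}\le 1-\frac{p(B)}{2}$. Hence
\[ F(p_{XY},p'_{XY})\le 2^{(c-c')/2}\,p_{XY}(B)+\sqrt{1-p_{XY}(B)}\le 1-p_{XY}(B)\left(\tfrac12-2^{(c-c')/2}\right), \]
not $1-p(B)\bigl(1-2^{(c-c')/2}\bigr)$. The $\frac12$ forces a gap $c'-c>2$.

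With your gap, $2^{(c-c')/2}=\gamma/(2\sqrt{8+\gamma^2})\le 1/6$ for $\gamma\le1$. Also $P\le\gamma$ gives $1-F\le 1-\sqrt{1-\gamma^2}\le\gamma^2$. Together these yield $p_{XY}(B)\le 3\gamma^2\le 18\gamma^2$.

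No renormalization of $p'$ is needed. The generalized fidelity equals $\sum\sqrt{pp'}$ because $p$ is normalized, and only $p'(B^c)\le 1$ is used. The hypothesis $\gamma\le 1/\sqrt{12}$ plays no role in this argument.

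Compared with chaining Theorems~1 and~2 of \cite{ABJT18}, which is all the paper offers, this direct route is self-contained and gives a better constant.
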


A more standard measure of information between two variables or quantum registers is the mutual information, defined below.
\begin{definition}\label{def:I}
The mutual information between registers $X$ and $Y$ in a quantum state $\rho_{XY}$ is defined as
\[ \sfI(X:Y)_\rho = \sfH(X)_\rho + \sfH(Y)_\rho - \sfH(XY)_\rho,\]
where $\sfH(A)_\rho = - \Tr(\rho_A\log\rho_A)$. For a classical random variable $X$, $\sfH(X)$ reduces to $\sum_xp_X(x)\log(1/p_X(x))$.
\end{definition}
We always have $\sfI(X:Y)_\rho \leq \Imax(X:Y)_\rho$, and in particular if $\Imax(X:Y)=0$, then $\sfI(X:Y)=0$.

\subsection{Nonlocal games}
A two-player non-local game $G$ is described as $(p, \clX\times\clY,\clA\times\clB, \sfV)$ where $p$ is a probability distribution over the input set $\clX\times\clY$, $\clA\times\clB$ is the output set, and $\sfV:\clX\times\clY\times\clA\times\clB \to \{0,1\}$ is a predicate. It is played as follows: a referee selects inputs $(x,y)$ according to $p$, sends $x$ to Alice and $y$ to Bob. In a classical strategy for $G$, Alice and Bob send outputs $a$ and $b$ back to the referee, which are just functions of $x$ and $y$~\footnote{In principle they can also share randomness, and produce outputs that are functions of their inputs as well as the shared randomness, but it can be shown that this does not help.} In a quantum strategy, Alice and Bob are allowed to share entanglement; they perform measurements on their respective halves of the entangled state depending on their inputs, and send their outputs $(a,b)$ back to the referee. The referee accepts and Alice and Bob win the game iff $\sfV(x,y,a,b) = 1$.
\begin{definition}
The classical value of a game $G=(p,\clX\times\clY,\clA\times\clB,\sfV)$, denoted by $\omega(G)$, is the maximum winning probability of Alice and Bob, averaged over the distribution $p$, over all classical strategies for $G$.

The quantum value of a game $G$, denoted by $\omega^*(G)$, is the maximum winning probability of Alice and Bob, averaged over the distribution $q$ as well as inherent randomness in the strategy, over all quantum strategies for $G$.
\end{definition}

For any game $G$, we can define its parallel-repeated version, and its threshold parallel-repeated version, as follows.
\begin{itemize}
\item In the standard parallel-repeated version of $G$, denoted by $G^n$, Alice and Bob get inputs $x_1, \ldots, x_n$ and $y_1\ldots y_n$ from $\clX^n\times\clY^n$ according to the product input distribution $p^{\otimes n}$, and must produce outputs $a_1\ldots a_n, b_1\ldots b_n \in \clA^n\times\clB^n$ such that for all $i\in[n]$, $\sfV(x_i, y_i, a_i, b_i)$ is satisfied.
\item In the $t$-out-of-$n$ threshold version of $G$, denoted by $G^{t/n}$, Alice and Bob get the same inputs as in $G^n$, and must produce outputs from $\clA^n\times\clB^n$, such that their input and outputs satisfy
\[ \left|\left\{i\in [n]: \sfV(x_i, y_i, a_i, b_i)=1\right\}\right| \geq t.\]
\end{itemize}

Of particular interest to us will be the CHSH game. In a single copy of the CHSH game, the inputs and outputs are all bits, and they must satisfy $a\oplus b = x\cdot y$. The classical value of a single copy of CHSH is $\frac{3}{4}$, while the quantum value is $\cos^2(\pi/8) \approx 0.85$.

The optimal strategy for a parallel-repeated game is not always to do the optimal strategy for the single-copy game $n$ times. The exact values of parallel-repeated games are therefore often not known. However, for the CHSH game, the parallel-repeated classical and quantum values are in fact well-characterized.
\begin{fact}[\cite{CSUU08}]
The quantum value of $n$ copies of the CHSH game is
\[ \omega^*(\chsh^n) = \cos^{2n}\left(\frac{\pi}{8}\right).\]
\end{fact}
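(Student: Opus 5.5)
The statement has two halves. The lower bound $\omega^*(\chsh^n) \geq \cos^{2n}(\pi/8)$ follows by playing the optimal single-copy strategy independently on each coordinate. Alice and Bob share $n$ EPR pairs. On coordinate $i$, each measures her or his half of the $i$-th pair in the optimal CHSH basis chosen by $x_i$ (resp.\ $y_i$). The inputs are drawn from $p^{\otimes n}$ and the state and measurements are tensor products, so the $n$ winning events are independent. Each has probability $\cos^2(\pi/8)$, and their product is $\cos^{2n}(\pi/8)$.

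For the upper bound I will expand the indicator that every coordinate is won into XOR games. Fix any quantum strategy. For outputs $a,b$ and inputs $x,y$,
\[ \prod_{i=1}^n \frac{1+(-1)^{a_i\oplus b_i\oplus x_iy_i}}{2} = 2^{-n}\sum_{S\subseteq[n]} (-1)^{\bigoplus_{i\in S}(a_i\oplus b_i\oplus x_iy_i)}. \]
Taking expectations, the winning probability equals $2^{-n}\sum_S \beta_S$. Here $\beta_S$ is the bias that the same strategy achieves in the XOR game $G_S$. In $G_S$ the players receive $x_S,y_S$ distributed uniformly, output the single bits $\bigoplus_{i\in S}a_i$ and $\bigoplus_{i\in S}b_i$, and must have their XOR equal $\bigoplus_{i\in S}x_iy_i$. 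Hence it suffices to show that the quantum bias of $G_S$ is at most $(1/\sqrt2)^{|S|}$. Granting this,
\[ \omega^*(\chsh^n) \leq 2^{-n}\sum_{S}2^{-|S|/2} = \left(\frac{1+1/\sqrt2}{2}\right)^n = \cos^{2n}(\pi/8). \]

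The main obstacle is this bias bound, which is perfect parallel repetition of the bias for the XOR of XOR games. I will use Tsirelson's characterization of the optimal quantum bias of an XOR game with cost matrix $G$ (entries $p(x,y)(-1)^{f(x,y)}$). It is $\max \sum_{x,y}G_{xy}\langle u_x,v_y\rangle$ over unit vectors $u_x, v_y$, which is a semidefinite program. Its dual is
\[ \min \tfrac12\Big(\sum_x \alpha_x+\sum_y \beta_y\Big) \quad \text{subject to} \quad \begin{pmatrix}\mathrm{diag}(\alpha) & -G\\ -G^{T} & \mathrm{diag}(\beta)\end{pmatrix}\succeq 0. \]
Rescaling $\alpha,\beta$ to balance the two sums, this is equivalent to
\[ \min \Big(\sum_x\alpha_x\Big)^{1/2}\Big(\sum_y\beta_y\Big)^{1/2} \quad\text{subject to}\quad \big\|\mathrm{diag}(\alpha)^{-1/2}\,G\,\mathrm{diag}(\beta)^{-1/2}\big\|\leq 1 \]
(with a limiting argument if some $\alpha_x$ or $\beta_y$ vanishes). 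The cost matrix of $G_S$ is exactly $G^{\otimes |S|}$, with $G$ the CHSH cost matrix. If $(\alpha,\beta)$ is optimal for single CHSH, then $(\alpha^{\otimes|S|},\beta^{\otimes|S|})$ is feasible for $G_S$, because operator norms are multiplicative under tensor products. The sums $\sum\alpha$ and $\sum\beta$ are also multiplicative, so the dual value of $G_S$ is at most $(1/\sqrt2)^{|S|}$. For single CHSH, taking $\alpha=\beta=\frac{1}{2\sqrt2}(1,1)$ works. The normalized matrix is then $\frac{1}{\sqrt2}\begin{pmatrix}1&1\\1&-1\end{pmatrix}$, which has norm $1$, and the objective equals $1/\sqrt2$. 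By weak duality this bounds $\beta_S$ for every quantum strategy.

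The main point to verify carefully is that each $\beta_S$ is the bias of a genuine quantum strategy for $G_S$. This holds because the product of the commuting $\pm1$ observables $\prod_{i\in S}A^{(i)}$ is again a $\pm1$ observable on Alice's side, and likewise for Bob. Weak duality then applies to each term separately, and summing over $S$ gives the upper bound.
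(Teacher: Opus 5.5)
The paper gives no proof of this fact and simply cites \cite{CSUU08}. Your argument is correct and is essentially the proof given there: you make the quantum bias multiplicative under XOR-composition by tensoring a dual SDP solution, and then Fourier-expand the indicator of winning every coordinate. One point you gloss over is that the observables $A^S_x, B^S_y$ still depend on $x_{\bar S}, y_{\bar S}$; this is harmless, since fixing those inputs exhibits $\beta_S$ as an average of biases of genuine $\pm 1$-observable strategies for $G_S$, each bounded by $2^{-|S|/2}$.
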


The following upper bound on the classical value of $\chsh^n$ is an unpublished result due to Ambainis, following \cite{DS14}.
\begin{fact}[\cite{DS14}, see also \cite{SO14}]\label{thm:chsh-n}
The classical value of $n$ copies of the CHSH game satisfies
\[ \omega(\chsh^n) \leq \left(\frac{1+\sqrt{5}}{4}\right)^n \approx (0.81)^n.\]
Moreover, for large enough $n$, the inequality is an equality.
\end{fact}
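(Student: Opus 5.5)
Since this is a cited fact, the plan is to reconstruct the argument of \cite{DS14} rather than invent a new route. First I would reduce to deterministic strategies $a,b:\{0,1\}^n\to\{0,1\}^n$; shared randomness is a convex combination and cannot help. Winning all $n$ copies means $b(y)=a(x)\oplus(x\wedge y)$ coordinatewise, where $x\wedge y$ is the bitwise AND. So
\[ \omega(\chsh^n)=\max_{a,b}\,4^{-n}\left|\left\{(x,y): b(y)=a(x)\oplus(x\wedge y)\right\}\right|. \]
Fix $b$. For each $y$ let $S_y$ be the set of $x$ consistent with $y$. Then Alice's best choice for $x$ is the string $\alpha$ maximizing $|\{y: b(y)\oplus(x\wedge y)=\alpha\}|$. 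The quantity to control is therefore $4^{-n}\sum_x \max_\alpha |\{y : b(y)=\alpha\oplus(x\wedge y)\}|$.

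The core step is an induction on $n$ that splits on the first coordinate. I would strengthen the hypothesis to a weighted or ``restricted'' version: for any subset $T\subseteq\{0,1\}^n$ of Bob's inputs on which the predicate must hold, bound the number of winning pairs by a function $F_n(|T|)$. Splitting $y=(y_1,y')$ and $x=(x_1,x')$ gives four subproblems of size $n-1$. When $x_1=0$, the first-coordinate constraint forces $b(y)_1=a(x)_1$ for both values of $y_1$, so the two halves of Bob's table must \emph{agree} in the first bit. When $x_1=1$, the halves must \emph{disagree}. This coupling should yield a two-term linear recursion of Fibonacci type,
\[ F_n \le \tfrac14\left(2F_{n-1}+ 2F_{n-1}' \right)\quad\text{with}\quad F_{n-1}'\le F_{n-2}\text{-type terms}, \]
whose characteristic root is $\phi/2=(1+\sqrt5)/4$. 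I would try to find the right potential function by running the transfer matrix $\tfrac14\begin{pmatrix}1&1\\1&0\end{pmatrix}\cdot 2$ backwards. Its top eigenvalue $\tfrac{1+\sqrt5}{4}$ is exactly the claimed base.

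For the ``moreover'' (tightness for large $n$) part, I would exhibit explicit strategies that beat $(3/4)^n$: coordinated strategies on blocks where Alice's output in one copy depends on her inputs in other copies. Their success probability satisfies the matching Fibonacci recursion, so the value is at least $c\,((1+\sqrt5)/4)^n$. Turning this into exact equality needs the precise constants in the recursion to match, which is why it is only claimed for large $n$.

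The main obstacle is the strengthened inductive hypothesis. The naive statement ``value $\le$ something'' does not recurse, because the $x_1=0$ branch couples Bob's two half-tables through an agreement constraint. Finding the right auxiliary quantity (for example, a bound on the number of agreeing pairs of half-strategies, tracked jointly with the ordinary value) is the delicate part, and it is where the golden ratio should emerge. I would first test candidate potentials numerically for $n\le 4$ by brute force before committing to one.
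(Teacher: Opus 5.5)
The paper gives no proof of this Fact; it imports it from \cite{DS14} (an unpublished argument of Ambainis, see \cite{SO14}). Your proposal does not supply a proof either. The reduction to deterministic strategies and the best-response formula are fine. But the one step with real content, the strengthened hypothesis that would make a first-coordinate split close, is explicitly left open, and the displayed recursion is asserted rather than derived.

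Its constants do not even give the eigenvalue you quote. With $F'_{n-1}\le F_{n-2}$ you get $F_n\le\frac12F_{n-1}+\frac12F_{n-2}$, whose dominant root is $1$, i.e.\ no decay at all. You would need $F'_{n-1}\le\frac12F_{n-2}$. With $F_0=1$ and $F_1=3/4$, that recursion would give $\omega(\chsh^n)\le \mathrm{Fib}_{n+3}/2^{n+1}$, which is strictly stronger than the bound to be proved, and nothing in your sketch supports it.

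More fundamentally, a hypothesis phrased only in terms of counts, such as $F_n(|T|)$, does not tensorize: the value is not multiplicative, since $\omega(\chsh^2)=10/16>(3/4)^2$. Splitting off the first coordinate couples the four $(n-1)$-copy sub-blocks along the $8$-cycle formed by the CHSH question--answer pairs. Bounding each sub-block by the $(n-1)$-copy quantity and summing therefore gives only a trivial bound.

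The missing idea is the analytic one behind the cited result. One bounds the value by a relaxation over nonnegative, $\ell_2$-normalized weightings that is \emph{multiplicative} under parallel repetition, and evaluates it on a single copy. In such a relaxation the golden ratio appears as a spectral norm, not as a count recursion. For a deterministic strategy with $0/1$ winning matrix $W[x,y]=[a(x)\oplus b(y)=x\wedge y]$, one has $\Pr[\text{win}]=4^{-n}\mathbf{1}^{T}W\mathbf{1}\le 2^{-n}\|W\|$. For $n=1$ every such $W$ has at most three ones, so $\|W\|\le\left\|\begin{pmatrix}1&1\\1&0\end{pmatrix}\right\|=\phi$. Showing that a quantity of this kind grows at most like $\phi^n$ over $n$ copies is the actual content of the cited bound, and it is exactly what your plan leaves open.

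Second, the ``moreover'' clause cannot be reached by matching constants, because read literally it is false. $\omega(\chsh^n)$ is a maximum over finitely many deterministic strategies, hence lies in $4^{-n}\mathbb{Z}$, whereas $\bigl(\frac{1+\sqrt5}{4}\bigr)^n=\phi^n/2^n$ is irrational for every $n\ge1$. So the inequality is always strict. The only meaningful content is tightness of the exponential rate, $\lim_n\omega(\chsh^n)^{1/n}=\frac{1+\sqrt5}{4}$; the limit exists and equals $\sup_k\omega(\chsh^k)^{1/k}$ by supermultiplicativity.

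Toward this you exhibit no strategies, and the claimed lower bound $c\bigl(\frac{1+\sqrt5}{4}\bigr)^n$ is unsupported. Fixed-size blocks cannot suffice, since by the same irrationality argument each $\omega(\chsh^k)^{1/k}$ is strictly below $\frac{1+\sqrt5}{4}$. For instance, Alice answering $(0,x_1x_2)$ and Bob answering $(y_1y_2,0)$ wins $\chsh^2$ with the optimal probability $10/16$, a rate of only about $0.79$. One would need strategies on growing blocks whose rates converge to the bound.
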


\section{Task achieving quantum information advantage}\label{sec:main}
In this section, we prove Theorem \ref{thm:main}, which is restated below.
\main*

\begin{proof}[Proof of Theorem \ref{thm:main}]
As already stated, we use $\chsh^{t/n}$ as the relation for the proof of quantum information advantage protocol. The protocol using this relation can be described as follows:
\begin{enumerate}
\item The verifier and the prover get $1^n$.
\item At time $t_0$, the verifier samples uniform $x\in\{0,1\}^n$ and sends it to the prover.
\item The prover replies with $a\in\{0,1\}^n$, which the verifier stores.
\item At time $t_1$, the verifier samples uniform $y\in\{0,1\}^n$ and sends it to the prover.
\item The prover replies with $b\in\{0,1\}^n$.
\item The verifier accepts if $x,y,a,b$ satisfy
\[ \left|\{i: a_i\oplus b_i=x_i\cdot y_i\}\right| \geq  0.83n.\]
\end{enumerate}
For soundness of the protocol, we invoke Theorem \ref{thm:cl-inf-ub} with $\eps = \frac{2}{100}$ and $\delta=\frac{n}{(\ln 2)\times 10^4}$. This implies an upper bound $72\delta^2$ on the winning probability of a classical prover satisfying 
$$\Imax^\delta(X:M) \leq \frac{4n}{(\ln 2)\times 10^4} - \log\log \frac{1}{\delta}- 19.$$ 

We will now prove the completeness and noise-robustness part of the theorem. The quantum strategy for $\chsh^{t/n}$ will just be running the optimal quantum strategy for CHSH $n$ times, with the caveat that Alice's measurement is done at time $t_0$ and Bob's measurement at time $t_1$. In the optimal quantum strategy for CHSH, Alice and Bob share an EPR pair, and the optimal quantum winning probability is $\cos^2\left(\frac{\pi}{8}\right) \approx 0.85$. If instead of $n$ exact EPR pairs, in our model Alice and Bob have access to $n$ noisy EPR pairs, and on each copy they can perform noisy versions of the measurements corresponding to the CHSH strategy such that the winning probability of a single copy of CHSH is at least $0.85 - \gamma$. Since winning on each copy is independent, by the Chernoff bound (Fact~\ref{fact:chernoff}), the probability that the winning condition is satisfied on at least $(0.85-2\gamma)n = 0.83n$ copies is at least $1 - e^{-2\gamma^2n}$.

The quantum memory $M$ that needs to be preserved between times $t_0$ and $t_1$ is Bob's part of the noisy EPR pairs. By no-signaling, Bob's system is not correlated with Alice's input $X$. Therefore, we have $\Imax(X:M) = 0$.
\end{proof}

\subsection{Classical lower bound}
We first prove an upper bound on the classical value of the threshold parallel-repeated CHSH game in the standard non-local game setting without communication, and then extend it to the communication setting.
\begin{lemma}\label{lem:CHSH-thres}
The classical value of winning $t$ out of $n$ CHSH games, for $t=((1+\sqrt{5})/4+\eps)n$ is upper bounded by
\[ \omega(\chsh^{t/n}) \leq e^{-2\eps^2n}. \]
\end{lemma}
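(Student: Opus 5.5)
Fix an arbitrary classical strategy for $\chsh^{t/n}$. Since shared randomness cannot increase the winning probability (it is an average over deterministic strategies), we may assume it is deterministic: $a = f(x)$ and $b = g(y)$ for functions $f,g:\{0,1\}^n\to\{0,1\}^n$, with $x,y$ uniform and independent. For $i\in[n]$ let $W_i$ be the indicator that $f(x)_i\oplus g(y)_i = x_i y_i$, i.e.\ that the strategy wins copy $i$. The goal is to bound $\Pr[\sum_i W_i \geq t]$ with $t = ((1+\sqrt5)/4+\eps)n$. For this we will use the generalized Chernoff bound (Fact~\ref{fc:gen-chernoff}) with parameters $\delta = (1+\sqrt5)/4$ and $\gamma = \delta+\eps$. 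This gives exactly $e^{-2\eps^2 n}$, provided the hypothesis holds: for every $S\subseteq[n]$, $\Pr[\land_{i\in S} W_i]\le \delta^{|S|}$.

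The key step is checking this hypothesis via Fact~\ref{thm:chsh-n}. Fix $S$ with $|S|=k$. The event $\land_{i\in S}W_i$ means that the strategy wins all copies indexed by $S$. We turn the strategy into a strategy for $\chsh^k$ played on the coordinates in $S$, as follows.
\begin{itemize}
\item Alice and Bob receive $x_S,y_S$.
\item They sample the remaining coordinates $x_{\bar S}, y_{\bar S}$ independently and uniformly, using either shared or private randomness.
\item They output $f(x)_S$ and $g(y)_S$.
\end{itemize}
Because the coordinates are independent and uniform, the full input $(x,y)$ built this way has exactly the original distribution. Hence this $\chsh^k$ strategy wins with probability exactly $\Pr[\land_{i\in S}W_i]$. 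It is a classical strategy, possibly randomized, so by the footnote its success is at most $\omega(\chsh^k)$. Fact~\ref{thm:chsh-n} then bounds this by $((1+\sqrt5)/4)^k$. The case $S=\emptyset$ holds trivially.

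The only point needing care is the requirement $\gamma\in[\delta,1]$ in Fact~\ref{fc:gen-chernoff}. Here $\gamma \ge \delta$ holds because $\eps\ge 0$. If $\delta+\eps>1$, then $t>n$ and the winning probability is $0$, so the bound holds trivially. I expect no real obstacle. The substantive input is the product-form bound for $\omega(\chsh^k)$ valid for \emph{all} $k$, which Fact~\ref{thm:chsh-n} supplies. The restriction-to-$S$ embedding is routine, but it does rely on the input distribution being a product across coordinates, which holds since $x$ and $y$ are uniform.
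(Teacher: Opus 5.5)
Your proof is correct and follows the same route as the paper. You embed a $\chsh^{|S|}$ instance into the coordinates in $S$ and sample the remaining coordinates independently, which gives the $((1+\sqrt5)/4)^{|S|}$ bound for every subset via Fact~\ref{thm:chsh-n}, and then you apply the generalized Chernoff bound. Your extra remarks on derandomization and the $\gamma\le 1$ edge case only make explicit details the paper leaves implicit.
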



\begin{proof}
First we argue that for $n$ CHSH games, the probability of winning in any subset $S$ is at most $\left(\frac{1+\sqrt{5}}{4}\right)^{|S|}$. Suppose this is not true for some subset $S$. Then we can give a strategy for $\chsh^{|S|}$ as follows: use the strategy for $\chsh^n$, embed the actual $|S|$ inputs for $\chsh^{|S|}$ into the subset $S$ of $n$, and sample the $n-|S|$ other inputs according to the CHSH distribution. The probability of winning $\chsh^{|S|}$ using this strategy is exactly the probability of winning in the subset $S$ of the original strategy for $\chsh^n$. The probability of winning in $S$ thus cannot be higher than $\left(\frac{1+\sqrt{5}}{4}\right)^{|S|}$ by Fact \ref{thm:chsh-n}.

From the above argument, and using the generalized Chernoff bound (Fact \ref{fc:gen-chernoff}), we have the desired result.
\end{proof}

\begin{cor}\label{cor:cl-comm-ub}
Suppose Alice and Bob, who share randomness but no entanglement, get inputs for $n$ copies of CHSH. Alice produces her outputs, sends a classical message to Bob consisting of $c$ bits, and Bob produces his outputs using this classical message from Alice. Then their probability of winning $t$ out of $n$ copies of CHSH for $t=((1+\sqrt{5})/4+\eps)n$ satisfies
\[ \Pr\left[\text{Win } \chsh^{t/n}\right] \leq 2^c\cdot e^{-2\eps^2n}.\]
\end{cor}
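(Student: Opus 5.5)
The plan is the standard ``guess the message'' reduction, which turns the one-way communication protocol into a no-communication strategy while losing only a factor $2^c$; Lemma~\ref{lem:CHSH-thres} then finishes the argument.

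First, fix the shared randomness. The winning probability of the communication protocol is an average over the shared random string $r$ of the winning probabilities of the protocols with $r$ fixed. So it suffices to prove the bound for deterministic protocols, or equivalently to prove it for each $r$ and average. After fixing $r$, Alice's outputs are a function $a(x)$, her message is a function $m(x)\in\{0,1\}^c$, and Bob's outputs are a function $b(y,m)$.

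Next, build a no-communication strategy for $\chsh^{t/n}$. Alice outputs $a(x)$ as before. Bob, who has no message, picks a guess $m'\in\{0,1\}^c$ uniformly at random, independently of everything else (or via shared randomness), and outputs $b(y,m')$. With probability exactly $2^{-c}$ the guess equals $m(x)$, and in that case the input/output tuple is identical to the one in the communication protocol. Since $m'$ is independent of $(x,y)$,
\[ \Pr[\text{new strategy wins}] \geq \Pr[m'=m(x)\ \wedge\ \text{original protocol wins}] = 2^{-c}\Pr\left[\text{Win } \chsh^{t/n}\right]. \]
The new strategy is a classical strategy with shared randomness and no communication. By the footnote remark that shared randomness does not help, or simply by averaging over $m'$ to find a fixed good guess, its success probability is at most $\omega(\chsh^{t/n})\leq e^{-2\eps^2 n}$ by Lemma~\ref{lem:CHSH-thres}. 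Rearranging gives $\Pr[\text{Win } \chsh^{t/n}]\leq 2^c e^{-2\eps^2 n}$.

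The main point requiring care is that Lemma~\ref{lem:CHSH-thres} bounds the value over all classical strategies, including those with shared randomness. Its proof goes through subset-winning probabilities and Fact~\ref{thm:chsh-n}, and that fact applies to randomized strategies because any randomized strategy is a convex combination of deterministic ones. So the guessing strategy is covered. No other obstacles are expected: the remaining steps are bookkeeping, namely that the independence of the guess from the inputs gives exactly the $2^{-c}$ factor.
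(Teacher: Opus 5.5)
Your proposal is correct and is essentially the paper's argument: Bob replaces Alice's $c$-bit message with a uniformly random guess, which is correct with probability $2^{-c}$ independently of the inputs. The resulting no-communication strategy therefore wins with probability at least $2^{-c}\Pr[\text{Win } \chsh^{t/n}]$, and Lemma~\ref{lem:CHSH-thres} finishes the proof; fixing the shared randomness and derandomizing the guess, as you do, is extra bookkeeping that the paper leaves implicit.
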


\begin{proof}
Suppose the upper bound does not hold. Then we can give a strategy for $\chsh^{t/n}$ without communication with winning probability more than $e^{-2\eps^2n}$, contradicting Lemma \ref{lem:CHSH-thres}. The way this strategy works is that Bob will pretend his private randomness is the message from Alice. Since Alice's message is $c$ bits, Bob's private randomness will be equal to the message Alice would have sent with probability $2^{-c}$. Therefore, if the probability of winning with the communication is greater than $2^c\cdot e^{-2\eps^2n}$, then the probability of winning without communication is greater than $2^{-c}\cdot 2^c\cdot e^{-2\eps^2n} = e^{-2\eps^2n}$.
\end{proof}

We are now ready to prove the main result of this section, where we characterize the communication by the information it conveys about $X$, and relate this to the winning probability of $\chsh^{t/n}$.
\begin{theorem}\label{thm:cl-inf-ub}
Suppose Alice and Bob, who share randomness but no entanglement, get inputs $X$ and $Y$ for $n$ copies of CHSH. Alice produces her outputs, sends a classical message $M$ satisfying 
$$\Imax^{\delta}(X:M) < \frac{2\eps^2n}{\ln2} - 4\log(1/\delta) - \log\log(1/\delta) - 19$$ for some $\delta, \eps >0$, and Bob produces his outputs using this classical message from Alice. Then their probability of winning $t$ out of $n$ copies of CHSH for $t=((1+\sqrt{5})/4+\eps)n$ satisfies
\[ \Pr\left[\text{Win } \chsh^{t/n}\right] < 72\delta^2.\]
\end{theorem}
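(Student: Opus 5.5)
Our plan is to reduce Theorem \ref{thm:cl-inf-ub} to Corollary \ref{cor:cl-comm-ub}, using the bound on $\Is$ from Fact \ref{fc:Is-Imax} and a rejection-sampling style simulation of the message. First we pass from $\Imax^\delta$ to the spectrum quantity. Fact \ref{fc:Is-Imax} (valid for $\delta\le 1/\sqrt{12}$, and otherwise $72\delta^2\ge 6$ makes the claim trivial) gives
\[ \Is^{18\delta^2}(X:M) \leq \Imax^\delta(X:M) + \log\frac{8+\delta^2}{\delta^2} + 2 \leq \Imax^\delta(X:M) + 2\log(1/\delta) + 6 =: k. \]
Thus, except on a set $B$ of pairs $(x,m)$ of probability at most $18\delta^2$ under $p_{XM}$, the message satisfies $p_{M|x}(m)\le 2^{k}p_M(m)$. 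Throughout we condition on the shared randomness $R$ (or absorb $R$ into $X$'s side). Our plan is to treat the information-theoretic quantities as computed with $R$ fixed, then average over $R$.

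Next we build a zero-communication strategy. Alice and Bob use shared randomness to draw a list $m_1,\dots,m_K$ of i.i.d.\ samples from $p_M$, with $K = 2^{k}\ln(1/\delta)$ roughly. Alice, knowing $x$, performs rejection sampling: she accepts the first index $j$ with probability $p_{M|x}(m_j)/(2^k p_M(m_j))$. On bad pairs she truncates this ratio at $1$. This produces a sample distributed as $p_{M|x}$, up to total variation $O(\delta^2)$ coming from truncation on $B$, and up to failure probability $(1-2^{-k})^K\le\delta$, or better, made $O(\delta^2)$ by doubling $K$. Alice computes her outputs as in the communication protocol, treating $m_j$ as the message. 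Bob does not know $j$, so he guesses a uniformly random index $j'\in[K]$ and runs his part of the protocol on $m_{j'}$. The guess is correct with probability $1/K$, independently of everything else.

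We then analyze the result. Let $p$ be the winning probability of the communication protocol. The simulated game is won with probability at least $\frac{1}{K}\bigl(p - 18\delta^2 - O(\delta^2)\bigr)$. By Lemma \ref{lem:CHSH-thres} this is at most $e^{-2\eps^2 n}$. Hence
\[ p \le K e^{-2\eps^2 n} + c\,\delta^2 = 2^{\Imax^\delta + 2\log(1/\delta) + O(1)}\ln(1/\delta)\, e^{-2\eps^2 n} + c\delta^2 . \]
Plugging in the hypothesis $\Imax^\delta < \frac{2\eps^2 n}{\ln 2} - 4\log(1/\delta) - \log\log(1/\delta) - 19$ makes the first term at most $\delta^2$ times a small constant. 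The $-\log\log(1/\delta)$ absorbs the $\ln(1/\delta)$ factor from $K$, the extra $2\log(1/\delta)$ leaves the $\delta^2$, and the $-19$ absorbs the additive constants. Collecting constants should then give $p<72\delta^2$.

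The main obstacle is making the rejection-sampling simulation rigorous with the correct accounting. Specifically, we must check that truncation on the bad set and failure of the sampler together cost only $O(\delta^2)$ in total variation, so that the joint distribution of $(X,Y,M)$ in the simulation is close to the real one conditioned on Bob's guess being correct. We also need the constants to fit under $72 = 4\cdot 18$. If the rejection-sampling constants do not come out right, we would fall back on a cruder approach: treat $m$ as effectively a $k$-bit message on the good set, via a hashing argument in the spirit of Corollary \ref{cor:cl-comm-ub}.
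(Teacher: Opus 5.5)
Your route is the paper's, with the compression step unrolled. The paper also passes to $\Is^{18\delta^2}(X:M)$ via Fact~\ref{fc:Is-Imax}. It then applies Fact~\ref{thm:cl-comp}, a compression result based on this same kind of rejection sampling, with the index as the compressed message. It finishes with Corollary~\ref{cor:cl-comm-ub}, where Bob guesses that message; this is your uniform $j'$.

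Your bookkeeping closes after one correction. With truncation, a trial accepts with probability $2^{-k}(1-\beta_x)$, not $2^{-k}$, where $\beta_x=\sum_m\bigl(p_{M|x}(m)-2^kp_M(m)\bigr)^+$.
\begin{itemize}
\item Since $\bbE_x\beta_x\le 18\delta^2$, Markov gives $\Pr_x[\beta_x>1/2]\le 36\delta^2$.
\item For the other $x$, taking $K=2^{k+2}\ln(1/\delta)$ makes the abort probability at most $\delta^2$.
\item The total loss is therefore at most $18\delta^2+36\delta^2+\delta^2=55\delta^2$.
\item The hypothesis makes $Ke^{-2\eps^2 n}<\delta^2$, so $p<56\delta^2$.
\end{itemize}

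The genuine gap is your treatment of the shared randomness $R$. The hypothesis bounds $\Imax^\delta(X:M)$ for the law of $(X,M)$ \emph{averaged} over $R$. That does not control the quantities computed with $R=r$ fixed. For $M=X\oplus R$ with $R$ uniform, $\Imax(X:M)=0$, yet for every fixed $r$ the message determines $X$.

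Absorbing $R$ into Alice's side does not help either, because Bob's output also depends on $R$. Your simulation must reproduce the joint law of $(M,R)$ given $x$, so what actually has to be small is $\Imax^\delta(X:MR)$.

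The same example refutes the statement read literally. Take, say, $\eps=0.1$, $\delta=0.01$ and $n$ large. Alice outputs $a=0^n$ and sends $X\oplus R$; Bob recovers $x$ and outputs $b_i=x_iy_i$. They win every copy with probability $1$, while $\Imax^\delta(X:M)\le\Imax(X:M)=0$.

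The paper's proof avoids this only implicitly, because Fact~\ref{thm:cl-comp} is stated for private-coin protocols. That is also the right model for the single-device application, where any randomness available at both $t_0$ and $t_1$ must be stored in $M$. So you should do one of the following:
\begin{itemize}
\item restrict to private-coin protocols; or
\item count $R$ as part of the message, and rejection-sample pairs $(m,r)\sim p_{MR}$ under a bound on $\Imax^\delta(X:MR)$.
\end{itemize}
With either change, your argument goes through.
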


In order to prove this theorem, we use the following result on compressing classical messages.
\begin{fact}[\cite{JRS03,  Jai11}]\label{thm:cl-comp}
Suppose $M$ is the transcript of a one-way private coin classical communication protocol $\clP$ between Alice and Bob, who have inputs $X$ and $Y$ respectively from a product distribution, which has success probability $\alpha$ for some relation (under the product distribution on $X, Y$). If $\Is^\beta(X:M) \leq c$, for $\beta \in (0,\alpha/2]$, then there exists another deterministic one-way communication protocol $\clP'$ between Alice and Bob such that:
\begin{enumerate}[(i)]
\item The communication in $\clP'$ is at most $c+\log(2/\beta)+\log\log(2/\beta)$ bits;
\item The success probability of $\clP'$ for the relation is at least $\alpha - 2\beta$.
\end{enumerate}
\end{fact}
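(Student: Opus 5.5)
The approach is the rejection-sampling compression of \cite{JRS03, Jai11}: use shared randomness to simulate Alice's message $M$, then derandomize. Let $p_{XM}$ be the joint distribution of Alice's input and message in $\clP$, and let $p_M$ be the marginal. Call a pair $(x,m)$ \emph{good} if $p_{M|x}(m) \leq 2^c p_M(m)$. The hypothesis $\Is^\beta(X:M)\leq c$ says exactly that $\Pr_{(x,m)\sim p_{XM}}[(x,m)\text{ bad}] \leq \beta$. For each $x$ write $b_x := \Pr_{m\sim p_{M|x}}[(x,m)\text{ bad}]$, so that $\bbE_x[b_x]\leq\beta$.

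\textbf{Step 1 (public-coin simulation).} Alice and Bob share an i.i.d.\ sequence $(m_1,u_1),(m_2,u_2),\ldots$, where $m_j\sim p_M$ and $u_j$ is uniform on $[0,1]$. On input $x$, Alice finds the first index $j\leq K$ with $(x,m_j)$ good and $u_j\leq p_{M|x}(m_j)/(2^c p_M(m_j))$. She sends $j$ in binary, or a fixed index if no such $j$ exists. We take
\[
K := \left\lceil 2^c\cdot\tfrac{2}{\beta}\ln\tfrac{2}{\beta}\right\rceil,
\]
so that the message uses about $c+\log(2/\beta)+\log\log(2/\beta)$ bits, matching (i) up to the rounding constants we will need to track. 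Bob reads $m_j$ off the shared string and runs Bob's part of $\clP$ on $(y,m_j)$.

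Two properties follow from the construction. First, each trial accepts with probability exactly $(1-b_x)/2^c$. Second, conditioned on acceptance, $m_j$ is distributed as $p_{M|x}$ restricted to the good set and renormalized. That conditional distribution is at total variation distance $b_x$ from $p_{M|x}$.

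\textbf{Step 2 (error accounting).} This is the step I expect to be the main obstacle, because we must get total loss at most $2\beta$ with the stated message length. The probability that no acceptance occurs among $K$ trials is at most $\exp(-K(1-b_x)/2^c)$.

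\begin{itemize}
\item For $x$ with $b_x\leq 1/2$, this is at most $\exp(-\tfrac1\beta\ln\tfrac2\beta)\leq \beta/2$. Adding the distortion $b_x$, the simulated pair (message, Bob's view) is within $b_x+\beta/2$ of the original in total variation.
\item For $x$ with $b_x>1/2$, we simply charge a loss of $1$. By Markov's inequality such $x$ have probability at most $2\beta$.
\end{itemize}

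A cruder bound results if we combine these two charges naively. To reach exactly $2\beta$, I would try the following: charge each $x$ a loss of at most $\min(1,\, 2b_x + \beta/2)$, or else slightly enlarge $K$ within the $\log$ slack, and then average over $x$. This uses $\bbE_x[b_x]\leq\beta$ in both regimes. Since $X$ and $Y$ are independent, Bob's input is independent of everything Alice does, so the success probability of the simulation is at least $\alpha$ minus this expected total variation, i.e.\ at least $\alpha-2\beta$. The condition $\beta\leq\alpha/2$ only guarantees that this bound is nontrivial.

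\textbf{Step 3 (derandomization).} The inputs are drawn from a fixed product distribution. The success probability is therefore an average over the public string and over Bob's private coins in $\clP$, so some fixed choice of all of these achieves at least the average. Fixing that choice yields a deterministic one-way protocol $\clP'$ with the same message length and success probability at least $\alpha-2\beta$, which proves (i) and (ii).
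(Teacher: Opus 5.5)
Your route (rejection sampling against $p_M$ with shared samples, sending the accepted index, then fixing all coins by averaging) is the standard one behind this cited fact. The paper itself gives no proof. Steps 1 and 3 are sound. The gap is in Step 2: your two-regime accounting gives only about $\alpha-\tfrac52\beta$, and neither suggested repair reaches $\alpha-2\beta$.

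\begin{itemize}
\item Averaging the charge $\min(1,2b_x+\beta/2)$ still yields only $2\beta+\beta/2$.
\item Enlarging $K$ only shrinks the additive failure term. The Markov charge of $1$ on $\{x: b_x>1/2\}$ can by itself use the whole $2\beta$ budget: put mass just below $2\beta$ on inputs with $b_x$ slightly above $1/2$. Then any positive failure probability on the remaining inputs pushes your bound past $2\beta$.
\end{itemize}

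The lossy moves are adding the failure probability $f_x$ to the distortion $b_x$, and charging a full $1$ whenever $b_x>1/2$, although such inputs really lose only about $b_x$. The missing observation is that the two error sources overlap. The simulated distribution $\tilde p_{M|x}$ dominates $(1-f_x)$ times the renormalized good part pointwise, so
\[ \tfrac12\Vert\tilde p_{M|x}-p_{M|x}\Vert_1 \le \sum_{m\text{ bad}}p_{M|x}(m)+\sum_{m\text{ good}}p_{M|x}(m)\Big(1-\tfrac{1-f_x}{1-b_x}\Big)^+ = \max(b_x,f_x). \]
Set $\lambda:=K2^{-c}$, so that $f_x\le e^{-\lambda(1-b_x)}$. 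The map $b\mapsto e^{-\lambda(1-b)}-b$ is convex, with value $e^{-\lambda}$ at $b=0$ and $0$ at $b=1$. Hence $\max(b_x,f_x)\le b_x+e^{-\lambda}$ for every $x$. Averaging over $x$, the total loss is at most $\beta+e^{-\lambda}$, which is at most $2\beta$ once $\lambda\ge\ln(1/\beta)$. No Markov step is needed.

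There are two smaller points. First, sending $j\le K$ costs $\lceil\log K\rceil$ bits. With your $K$, the ceilings can overshoot $c+\log(2/\beta)+\log\log(2/\beta)$ by up to about $0.47$ bits, because the $\log\ln 2\approx-0.53$ margin does not absorb them. The slack is in the failure probability, not the message length, so you should shrink $K$ rather than enlarge it. Taking $K=2^{\lfloor c+\log(2/\beta)+\log\log(2/\beta)\rfloor}$ gives $\lambda\ge\frac1\beta\log\frac2\beta\ge\ln\frac1\beta$, so the corrected accounting still yields $\alpha-2\beta$ and (i) holds exactly. Second, in the way the paper applies this fact, Alice also outputs $a$ as part of the relation. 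Your simulation should then have Alice draw $a\sim p_{A|x,m_j}$ with private coins, fixed along with everything else in Step 3. The same total-variation bound then holds for $(X,A,M)$.
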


\begin{proof}[Proof of Theorem \ref{thm:cl-inf-ub}]
Let $\delta'=72\delta^2$. Suppose the winning probability of the original protocol for $\chsh^{t/n}$ (which we call $\clP$) is at least $\delta'$. We'll prove a lower bound on $ c = \Is^{\delta'/4}(X:M)$, which gives us the result via Fact \ref{fc:Is-Imax}. If 
$$c < \frac{2\eps^2n}{\ln2} - \log(16/\delta') -\log\log(8/\delta')$$ 
in $\clP$, then by Fact~\ref{thm:cl-comp}, there exists another protocol $\clP'$ whose success probability is at least $\delta' - 2\cdot\frac{\delta'}{4}$, and where Alice communicates 
$$c' = c + \log(8/\delta') + \log\log(8/\delta')$$ 
bits to Bob. By Corollary \ref{cor:cl-comm-ub}, we must have,
\[ \frac{\delta'}{2} \leq 2^{c'}\cdot e^{-2\eps^2n} \quad \quad \Rightarrow \quad \quad c' \geq \frac{2\eps^2n}{\ln 2} - \log(2/\delta').\]
Putting in the value of $c'$, we get,
\begin{align*}
c & \geq \frac{2\eps^2n}{\ln 2} - \log(2/\delta') - \log(8/\delta') - \log\log(8/\delta'), \\
\end{align*}
which contradicts the upper bound on $c$.

Putting in the value of $\delta'$, we get from Fact \ref{fc:Is-Imax},
\begin{align*}
\Imax^\delta(X:M) & \geq \frac{2\eps^2n}{\ln 2} - \log(1152/\delta^2) - \log\log(576/\delta^2) - \log((8+\delta^2)/\delta^2) - 2 \\
 & \geq \frac{2\eps^2n}{\ln 2} - 4\log(1/\delta) - \log\log(1/\delta) - 19
\end{align*}
for $\delta \leq \frac{1}{2}$.
\end{proof}

\bibliographystyle{alpha}
\bibliography{ref}

@InProceedings{JRS03,
author="Jain, Rahul
and Radhakrishnan, Jaikumar
and Sen, Pranab",
title="A Direct Sum Theorem in Communication Complexity via Message Compression",
booktitle="International Colloquium on Automata, Languages and Programming (ICALP 2003)",
year="2003",
pages="300--315",
isbn="978-3-540-45061-0"
}

@article{Jai11,
author = {Jain, Rahul},
title = {New Strong Direct Product Results in Communication Complexity},
year = {2015},
volume = {62},
number = {3},
issn = {0004-5411},
url = {https://doi.org/10.1145/2699432},
doi = {10.1145/2699432},
journal = {Journal of the ACM},
articleno = {20},
numpages = {27}
}

@article{PS97,
author = {Panconesi, Alessandro and Srinivasan, Aravind},
title = {Randomized Distributed Edge Coloring via an Extension of the Chernoff--Hoeffding Bounds},
journal = {SIAM Journal on Computing},
volume = {26},
number = {2},
pages = {350-368},
year = {1997},
doi = {10.1137/S0097539793250767},
URL = {https://doi.org/10.1137/S0097539793250767},
eprint = {https://doi.org/10.1137/S0097539793250767},
}

@inproceedings{DS14,
author = {Dinur, Irit and Steurer, David},
title = {Analytical approach to parallel repetition},
year = {2014},
isbn = {9781450327107},
url = {https://doi.org/10.1145/2591796.2591884},
doi = {10.1145/2591796.2591884},
booktitle = {Proceedings of the Forty-Sixth Annual ACM Symposium on Theory of Computing (STOC 2014)},
pages = {624–633},
numpages = {10},
}

@misc{KGD+25,
title={Demonstrating an unconditional separation between quantum and classical information resources}, 
      author={William Kretschmer and Sabee Grewal and Matthew DeCross and Justin A. Gerber and Kevin Gilmore and Dan Gresh and Nicholas Hunter-Jones and Karl Mayer and Brian Neyenhuis and David Hayes and Scott Aaronson},
      year={2025},
      eprint={2509.07255},
      archivePrefix={arXiv},
      primaryClass={quant-ph},
      howpublished={\url{https://arxiv.org/abs/2509.07255}}
}

@misc{SO14,
author = {Scott Aaronson},
title = {{The NEW Ten Most Annoying Questions in Quantum Computing}},
year = {2014},
howpublished = {\url{https://scottaaronson.blog/?p=1792}},
}

@ARTICLE{ABJT18,
  author={Anshu, Anurag and Berta, Mario and Jain, Rahul and Tomamichel, Marco},
  journal={IEEE Transactions on Information Theory}, 
  title={Partially Smoothed Information Measures}, 
  year={2020},
  volume={66},
  number={8},
  pages={5022-5036},
  doi={10.1109/TIT.2020.2981573}
}

@article{Aru19,
  author  = {Frank Arute and Kunal Arya and Ryan Babbush and Dave Bacon and Joseph C. Bardin and Rami Barends and Rupak Biswas and Sergio Boixo and Fernando G. S. L. Brandao and David A. Buell and Brian Burkett and Yu Chen and Zijun Chen and Ben Chiaro and Roberto Collins and William Courtney and Andrew Dunsworth and Edward Farhi and Brooks Foxen and Austin Fowler and Craig Gidney and Marissa Giustina and Rob Graff and Keith Guerin and Steve Habegger and Matthew P. Harrigan and Michael J. Hartmann and Alan Ho and Markus Hoffmann and Trent Huang and Travis S. Humble and Sergei V. Isakov and Evan Jeffrey and Zhang Jiang and Dvir Kafri and Kostyantyn Kechedzhi and Julian Kelly and Paul V. Klimov and Sergey Knysh and Alexander Korotkov and Fedor Kostritsa and David Landhuis and Mike Lindmark and Erik Lucero and Dmitry Lyakh and Salvatore Mandrà and Jarrod R. McClean and Matthew McEwen and Anthony Megrant and Xiao Mi and Kristel Michielsen and Masoud Mohseni and Josh Mutus and Ofer Naaman and Matthew Neeley and Charles Neill and Murphy Yuezhen Niu and Eric Ostby and Andre Petukhov and John C. Platt and Chris Quintana and Eleanor G. Rieffel and Pedram Roushan and Nicholas C. Rubin and Daniel Sank and Kevin J. Satzinger and Vadim Smelyanskiy and Kevin J. Sung and Matthew D. Trevithick and Amit Vainsencher and Benjamin Villalonga and Theodore White and Z. Jamie Yao and Ping Yeh and Adam Zalcman and Hartmut Neven and John Matthew Martinis},
  title   = {{Quantum supremacy using a programmable superconducting processor}},
  journal = {Nature},
  volume  = {574},
  pages   = {505--510},
  year    = {2019},
  doi     = {10.1038/s41586-019-1666-5}
}

@article{Wu21,
  author  = {Yulin Wu and Wan-Su Bao and Sirui Cao and Fusheng Chen and Ming-Cheng Chen and Xiawei Chen and Tung-Hsun Chung and Hui Deng and Yajie Du and Daojin Fan and Ming Gong and Cheng Guo and Chu Guo and Shaojun Guo and Lianchen Han and Linyin Hong and He-Liang Huang and Yong-Heng Huo and Liping Li and Na Li and Shaowei Li and Yuan Li and Futian Liang and Chun Lin and Jin Lin and Haoran Qian and Dan Qiao and Hao Rong and Hong Su and Lihua Sun and Liangyuan Wang and Shiyu Wang and Dachao Wu and Yu Xu and Kai Yan and Weifeng Yang and Yang Yang and Yangsen Ye and Jianghan Yin and Chong Ying and Jiale Yu and Chen Zha and Cha Zhang and Haibin Zhang and Kaili Zhang and Yiming Zhang and Han Zhao and Youwei Zhao and Liang Zhou and Qingling Zhu and Chao-Yang Lu and Cheng-Zhi Peng and Xiaobo Zhu and Jian-Wei Pan},
  title   = {{Strong quantum computational advantage using a superconducting quantum processor}},
  journal = {Physical Review Letters},
  volume  = {127},
  pages   = {180501},
  year    = {2021},
  doi     = {10.1103/PhysRevLett.127.180501}
}

@article{Zhu22,
  author  = {Qingling Zhu and Sirui Cao and Fusheng Chen and Ming-Cheng Chen and Xiawei Chen and Tung-Hsun Chung and Hui Deng and Yajie Du and Daojin Fan and Ming Gong and Cheng Guo and Chu Guo and Shaojun Guo and Lianchen Han and Linyin Hong and He-Liang Huang and Yong-Heng Huo and Liping Li and Na Li and Shaowei Li and Yuan Li and Futian Liang and Chun Lin and Jin Lin and Haoran Qian and Dan Qiao and Hao Rong and Hong Su and Lihua Sun and Liangyuan Wang and Shiyu Wang and Dachao Wu and Yulin Wu and Yu Xu and Kai Yan and Weifeng Yang and Yang Yang and Yangsen Ye and Jianghan Yin and Chong Ying and Jiale Yu and Chen Zha and Cha Zhang and Haibin Zhang and Kaili Zhang and Yiming Zhang and Han Zhao and Youwei Zhao and Liang Zhou and Chao-Yang Lu and Cheng-Zhi Peng and Xiaobo Zhu and Jian-Wei Pan},
  title   = {{Quantum computational advantage via 60-qubit 24-cycle random circuit sampling}},
  journal = {Science Bulletin},
  volume  = {67},
  pages   = {240--245},
  year    = {2022},
  doi     = {10.1016/j.scib.2021.10.017}
}

@article{Mor24,
  author  = {A. Morvan and B. Villalonga and X. Mi and S. Mandrà and A. Bengtsson and P. V. Klimov and Z. Chen and S. Hong and C. Erickson and I. K. Drozdov and J. Chau and G. Laun and R. Movassagh and A. Asfaw and L. T.A.N. Brandão and R. Peralta and D. Abanin and R. Acharya and R. Allen and T. I. Andersen and K. Anderson and M. Ansmann and F. Arute and K. Arya and J. Atalaya and J. C. Bardin and A. Bilmes and G. Bortoli and A. Bourassa and J. Bovaird and L. Brill and M. Broughton and B. B. Buckley and D. A. Buell and T. Burger and B. Burkett and N. Bushnell and J. Campero and H. S. Chang and B. Chiaro and D. Chik and C. Chou and J. Cogan and R. Collins and P. Conner and W. Courtney and A. L. Crook and B. Curtin and D. M. Debroy and A. {Del Toro Barba} and S. Demura and A. {Di Paolo} and A. Dunsworth and L. Faoro and E. Farhi and R. Fatemi and V. S. Ferreira and L. {Flores Burgos} and E. Forati and A. G. Fowler and B. Foxen and G. Garcia and E. Genois and W. Giang and C. Gidney and D. Gilboa and M. Giustina and R. Gosula and A. {Grajales Dau} and J. A. Gross and S. Habegger and M. C. Hamilton and M. Hansen and M. P. Harrigan and S. D. Harrington and P. Heu and M. R. Hoffmann and T. Huang and A. Huff and W. J. Huggins and L. B. Ioffe and S. V. Isakov and J. Iveland and E. Jeffrey and Z. Jiang and C. Jones and P. Juhas and D. Kafri and T. Khattar and M. Khezri and M. Kieferová and S. Kim and A. Kitaev and A. R. Klots and A. N. Korotkov and F. Kostritsa and J. M. Kreikebaum and D. Landhuis and P. Laptev and K. M. Lau and L. Laws and J. Lee and K. W. Lee and Y. D. Lensky and B. J. Lester and A. T. Lill and W. Liu and W. P. Livingston and A. Locharla and F. D. Malone and O. Martin and S. Martin and J. R. McClean and M. McEwen and K. C. Miao and A. Mieszala and S. Montazeri and W. Mruczkiewicz and O. Naaman and M. Neeley and C. Neill and A. Nersisyan and M. Newman and J. H. Ng and A. Nguyen and M. Nguyen and M. {Yuezhen Niu} and T. E. O'Brien and S. Omonije and A. Opremcak and A. Petukhov and R. Potter and L. P. Pryadko and C. Quintana and D. M. Rhodes and E. Rosenberg and C. Rocque and P. Roushan and N. C. Rubin and N. Saei and D. Sank and K. Sankaragomathi and K. J. Satzinger and H. F. Schurkus and C. Schuster and M. J. Shearn and A. Shorter and N. Shutty and V. Shvarts and V. Sivak and J. Skruzny and W. C. Smith and R. D. Somma and G. Sterling and D. Strain and M. Szalay and D. Thor and A. Torres and G. Vidal and C. {Vollgraff Heidweiller} and T. White and B. W. K. Woo and C. Xing and Z. J. Yao and P. Yeh and J. Yoo and G. Young and A. Zalcman and Y. Zhang and N. Zhu and N. Zobrist and E. G. Rieffel and R. Biswas and R. Babbush and D. Bacon and J. Hilton and E. Lucero and H. Neven and A. Megrant and J. Kelly and I. Aleiner and V. Smelyanskiy and K. Kechedzhi and Y. Chen and S. Boixo},
  title   = {{Phase transitions in random circuit sampling}},
  journal = {Nature},
  volume  = {634},
  pages   = {328--333},
  year    = {2024},
  doi     = {10.1038/s41586-024-07998-6}
}

@article{Gao25,
  author  = {Dongxin Gao and Daojin Fan and Chen Zha and Jiahao Bei and Guoqing Cai and Jianbin Cai and Sirui Cao and Xiangdong Zeng and Fusheng Chen and Jiang Chen and Kefu Chen and Xiawei Chen and Xiqing Chen and Zhe Chen and Zhiyuan Chen and Zihua Chen and Wenhao Chu and Hui Deng and Zhibin Deng and Pei Ding and Xun Ding and Zhuzhengqi Ding and Shuai Dong and Yupeng Dong and Bo Fan and Yuanhao Fu and Song Gao and Lei Ge and Ming Gong and Jiacheng Gui and Cheng Guo and Shaojun Guo and Xiaoyang Guo and Tan He and Linyin Hong and Yisen Hu and He-Liang Huang and Yong-Heng Huo and Tao Jiang and Zuokai Jiang and Honghong Jin and Yunxiang Leng and Dayu Li and Dongdong Li and Fangyu Li and Jiaqi Li and Jinjin Li and Junyan Li and Junyun Li and Na Li and Shaowei Li and Wei Li and Yuhuai Li and Yuan Li and Futian Liang and Xuelian Liang and Nanxing Liao and Jin Lin and Weiping Lin and Dailin Liu and Hongxiu Liu and Maliang Liu and Xinyu Liu and Xuemeng Liu and Yancheng Liu and Haoxin Lou and Yuwei Ma and Lingxin Meng and Hao Mou and Kailiang Nan and Binghan Nie and Meijuan Nie and Jie Ning and Le Niu and Wenyi Peng and Haoran Qian and Hao Rong and Tao Rong and Huiyan Shen and Qiong Shen and Hong Su and Feifan Su and Chenyin Sun and Liangchao Sun and Tianzuo Sun and Yingxiu Sun and Yimeng Tan and Jun Tan and Longyue Tang and Wenbing Tu and Cai Wan and Jiafei Wang and Biao Wang and Chang Wang and Chen Wang and Chu Wang and Jian Wang and Liangyuan Wang and Rui Wang and Shengtao Wang and Xinzhe Wang and Zuolin Wei and Jiazhou Wei and Dachao Wu and Gang Wu and Jin Wu and Shengjie Wu and Yulin Wu and Shiyong Xie and Lianjie Xin and Yu Xu and Chun Xue and Kai Yan and Weifeng Yang and Xinpeng Yang and Yang Yang and Yangsen Ye and Zhenping Ye and Chong Ying and Jiale Yu and Qinjing Yu and Wenhu Yu and Shaoyu Zhan and Feifei Zhang and Haibin Zhang and Kaili Zhang and Pan Zhang and Wen Zhang and Yiming Zhang and Yongzhuo Zhang and Lixiang Zhang and Guming Zhao and Peng Zhao and Xianhe Zhao and Xintao Zhao and Youwei Zhao and Zhong Zhao and Luyuan Zheng and Fei Zhou and Liang Zhou and Na Zhou and Naibin Zhou and Shifeng Zhou and Shuang Zhou and Zhengxiao Zhou and Chengjun Zhu and Qingling Zhu and Guihong Zou and Haonan Zou and Qiang Zhang and Chao-Yang Lu and Cheng-Zhi Peng and XiaoBo Zhu and Jian-Wei Pan},
  title   = {{Establishing a new benchmark in quantum computational advantage with 105-qubit Zuchongzhi 3.0 processor}},
  journal = {Physical Review Letters},
  volume  = {134},
  pages   = {090601},
  year    = {2025},
  doi     = {10.1103/PhysRevLett.134.090601}
}

@article{Dec25,
  author  = {Matthew DeCross and Reza Haghshenas and Minzhao Liu and Enrico Rinaldi and Johnnie Gray and Yuri Alexeev and Charles H. Baldwin and John P. Bartolotta and Matthew Bohn and Eli Chertkov and Julia Cline and Jonhas Colina and Davide DelVento and Joan M. Dreiling and Cameron Foltz and John P. Gaebler and Thomas M. Gatterman and Christopher N. Gilbreth and Joshua Giles and Dan Gresh and Alex Hall and Aaron Hankin and Azure Hansen and Nathan Hewitt and Ian Hoffman and Craig Holliman and Ross B. Hutson and Trent Jacobs and Jacob Johansen and Patricia J. Lee and Elliot Lehman and Dominic Lucchetti and Danylo Lykov and Ivaylo S. Madjarov and Brian Mathewson and Karl Mayer and Michael Mills and Pradeep Niroula and Juan M. Pino and Conrad Roman and Michael Schecter and Peter E. Siegfried and Bruce G. Tiemann and Curtis Volin and James Walker and Ruslan Shaydulin and Marco Pistoia and Steven A. Moses and David Hayes and Brian Neyenhuis and Russell P. Stutz and Michael Foss-Feig},
  title   = {{Computational power of random quantum circuits in arbitrary geometries}},
  journal = {Physical Review X},
  volume  = {15},
  pages   = {021052},
  year    = {2025},
  doi     = {10.1103/PhysRevX.15.021052}
}

@article{Mad22,
  author  = {Lars Skovgaard Madsen and Fabian Laudenbach and Mohsen Falamarzi Askarani and Fabien Rortais and Trevor Vincent and Jacob F. F. Bulmer and Filippo M. Miatto and Leonhard Neuhaus and Lukas G. Helt and Matthew J. Collins and Adriana E. Lita and Thomas Gerrits and Sae Woo Nam and Varun D. Vaidya and Matteo Menotti and Ish Dhand and Zachary Vernon and Nicolás Quesada and Jonathan Lavoie},
  title   = {{Quantum computational advantage with a programmable photonic processor}},
  journal = {Nature},
  volume  = {606},
  pages   = {75--81},
  year    = {2022},
  doi     = {10.1038/s41586-022-04725-x}
}

@article{AA13,
  author  = {Scott Aaronson and Alex Arkhipov},
  title   = {{The computational complexity of linear optics}},
  journal = {Theory of Computing},
  volume  = {9},
  pages   = {143--252},
  year    = {2013},
  doi     = {10.4086/toc.2013.v009a004}
}

@article{AG20,
  author  = {Scott Aaronson and Sam Gunn},
  title   = {{On the classical hardness of spoofing linear cross-entropy benchmarking}},
  journal = {Theory of Computing},
  volume  = {16},
  pages   = {1--34},
  year    = {2020},
  doi     = {10.4086/toc.2020.v016a011}
}

@article{Nad22,
  title = {{Experimental quantum key distribution certified by Bell’s theorem}},
  volume = {607},
  ISSN = {1476-4687},
  url = {http://dx.doi.org/10.1038/s41586-022-04941-5},
  DOI = {10.1038/s41586-022-04941-5},
  number = {7920},
  journal = {Nature},
  publisher = {Springer Science and Business Media LLC},
  author = {Nadlinger,  D. P. and Drmota,  P. and Nichol,  B. C. and Araneda,  G. and Main,  D. and Srinivas,  R. and Lucas,  D. M. and Ballance,  C. J. and Ivanov,  K. and Tan,  E. Y.-Z. and Sekatski,  P. and Urbanke,  R. L. and Renner,  R. and Sangouard,  N. and Bancal,  J.-D.},
  year = {2022},
  pages = {682–686}
}

@article{Zha22,
  title = {{A device-independent quantum key distribution system for distant users}},
  volume = {607},
  ISSN = {1476-4687},
  url = {http://dx.doi.org/10.1038/s41586-022-04891-y},
  DOI = {10.1038/s41586-022-04891-y},
  number = {7920},
  journal = {Nature},
  publisher = {Springer Science and Business Media LLC},
  author = {Zhang,  Wei and van Leent,  Tim and Redeker,  Kai and Garthoff,  Robert and Schwonnek,  René and Fertig,  Florian and Eppelt,  Sebastian and Rosenfeld,  Wenjamin and Scarani,  Valerio and Lim,  Charles C.-W. and Weinfurter,  Harald},
  year = {2022},
  pages = {687–691}
}

@article{CSUU08,
  title = {{Perfect Parallel Repetition Theorem for Quantum Xor Proof Systems}},
  volume = {17},
  ISSN = {1420-8954},
  url = {http://dx.doi.org/10.1007/s00037-008-0250-4},
  DOI = {10.1007/s00037-008-0250-4},
  number = {2},
  journal = {Computational Complexity},
  author = {Cleve,  Richard and Slofstra,  William and Unger,  Falk and Upadhyay,  Sarvagya},
  year = {2008},
  pages = {282–299}
}

\end{document}